\documentclass[aps,prl,twocolumn,superscriptaddress,longbibliography,reprint,floatfix]{revtex4-2}

\usepackage[T1]{fontenc}
\usepackage{amsmath}
\usepackage{amssymb}
\usepackage{amsfonts}
\usepackage{amsthm}
\usepackage{mathtools}
\usepackage{bm}
\usepackage{graphicx}
\usepackage[dvipsnames]{xcolor}

\definecolor{myrefcolor}{rgb}{0.067,0.5,0.5}

\newtheorem{theorem}{Theorem}
\newtheorem{corollary}{Corollary}
\newtheorem{lemma}{Lemma}

\usepackage{times}

\newcommand{\tr}{\operatorname{Tr}}
\newcommand{\SP}{\mathrm{SP}}
\newcommand{\CSP}{\mathrm{CSP}}
\newcommand{\ket}[1]{\lvert #1\rangle}
\newcommand{\bra}[1]{\langle #1\rvert}
\newcommand{\kb}[2]{\lvert #1\rangle\!\langle #2\rvert}
\newcommand{\D}{\mathcal D}
\newcommand{\R}{\mathcal R}
\newcommand{\N}{\mathcal N}
\newcommand{\Jd}{\mathcal J_{\downarrow}}
\newcommand{\Ju}{\mathcal J_{\uparrow}}

\usepackage[
  breaklinks=true,
  colorlinks=true,
  linkcolor=myrefcolor,
  citecolor=myrefcolor,
  urlcolor=myrefcolor
]{hyperref}

\hypersetup{
  pdftitle={Environmental records unlock universal quantum computation from thermal decoherence},
  pdfauthor={Chenfeng Cao and Qi Zhao}
}

\begin{document}

\title{Environmental records unlock universal quantum computation from thermal decoherence}

\author{Chenfeng Cao}
\email{caocf@hku.hk}
\affiliation{HK Institute of Quantum Science \& Technology, The University of Hong Kong, Pokfulam Road, Hong Kong SAR, China}

\author{Qi Zhao}
\email{zhaoqi@cs.hku.hk}
\affiliation{QICI Quantum Information and Computation Initiative, School of Computing and Data Science,
The University of Hong Kong, Pokfulam Road, Hong Kong SAR, China}

\begin{abstract}
At a fixed thermal exposure, the same stabilizer processor can be classically simulable or quantum universal, depending on which environmental records its controller retains. We give an exact computational classification of energy-counting thermal-idle instruments in a quantum processor with ideal stabilizer control and independent local Markov baths. The relaxation time $T_1$, homogeneous coherence time $T_2$, and equilibrium excited-state population $p_e$  determine an exact computational boundary at $(1-p_e)T_2/T_1=1$. If every location lies at or below it, branchwise nonnegative stabilizer decompositions give an explicit efficient classical sampler for the adaptive circuit and its full time-resolved exchange record. Above it at a single repeatedly accessible location, a suitable idle duration and no-exchange conditioning supply distillable ancillas and enable universal quantum computation with polynomial overhead. At finite temperature on the resource side, erasing the record at a sufficiently long, unsplit exposure makes the averaged channel stabilizer measure-and-prepare, and even a terminal parity check then yields only simulable branches. Yet at that same exposure, retaining only the bit recording whether any exchange occurred still heralds distillable ancillas, because a thermal round trip restores the parity after its first exchange has already removed the coherence.
\end{abstract}

\maketitle

Clifford gates, stabilizer preparations, Pauli measurements, and
classical feedforward admit efficient classical
simulation~\cite{Gottesman1999,AaronsonGottesman2004}.
We ask when coupling such a stabilizer processor to a monitored
thermal bath preserves this tractability and when it supplies
resources for universal quantum computation. The answer depends on
both the thermal dynamics and the environmental record available to
the controller.

During an idle, a qubit may exchange energy with its bath, leaving
records in photons or material excitations.
Quantum jumps have been monitored in superconducting
circuits~\cite{Vijay2011,Minev2019}, while erasure conversion
exposes related events in atomic and superconducting
platforms~\cite{Wu2022,Kubica2023,Ma2023,Scholl2023,Levine2024}.
Under ideal energy counting that resolves emission and absorption,
each detected exchange is an energy-basis measurement followed by
a reset, hence a stabilizer operation. The quiet branch, in which
no exchange occurs, can instead generate nonstabilizer states from
stabilizer inputs.

Earlier work on noise-induced simulability thresholds asks when unrecorded decoherence
renders an initially universal or hard-to-sample circuit
tractable~\cite{vanDamHoward2009,FujiiTamate2016,TrivediCirac2022}.
A positive Clifford-and-reset decomposition of the record-averaged
thermal-relaxation channel has also been used for error-correction
simulation~\cite{Garner2026}. Environmental measurements have been
used to reverse noisy channels~\cite{GregorattiWerner2003,BuscemiChiribellaDAriano2005},
while engineered dissipative dynamics can implement computational
operations or prepare magic states~\cite{SantosEtAl2012,MartinezAzcona2025}.
Here the controlled circuit is tractable before coupling to the bath,
and we ask what the retained exchange record adds for a specified
energy-counting instrument.

At the level of resource preparation, amplitude damping can generate
nonstabilizer states through no-click postselection from a Bell
pair~\cite{TriguerosGuzman2026}, while thermal-operation criteria
relate magic generation and distillability to
temperature~\cite{deOliveiraJunior2026}.
For record-averaged phase-covariant damping, Ref.~\cite{Cao2026}
identified finite-temperature magic islands and the zero-temperature
rebirth condition $T_2>T_1$, and showed how parity-syndrome extraction
makes the resource accessible. These state-level results leave open the computational status of the full monitored process itself.

Erasing the exchange record gives the averaged thermal channel. Retaining it makes the conditional branches available to the controller. For independent local Markov baths with ideal stabilizer
control, we establish the exact boundary $(1-p_e)T_2/T_1=1$,
where $p_e$ is the equilibrium excited-state population.
If every location lies at or below this boundary, nonnegative
stabilizer decompositions of all recorded branches remain valid
on entangled stabilizer inputs and under adaptive reuse, giving
an efficient sampler for circuit outputs and full time-resolved
exchange records.
Above the boundary, a single repeatedly accessible location with
selectable exposure supplies distillable ancillas through its quiet
branch, enabling universal quantum computation with polynomial
overhead.
On the resource side at nonzero temperature, a single sufficiently
long unsplit exposure makes the averaged channel stabilizer
measure-and-prepare and every terminal-parity branch simulable,
while retaining only the exchange bit, which records whether any exchange occurred, still heralds distillable
ancillas at that same exposure.

\smallskip
\paragraph*{Independent local baths.}
All noise in our model sits in the monitored idles: the stabilizer operations themselves are ideal, as in the ideal-Clifford, noisy-ancilla abstraction of
fault-tolerant computation~\cite{BravyiKitaev2005}.  Idles may be imposed or selected, and Theorem~\ref{thm:closed}(a) covers either.  Part (b) selects the exposure at one favorable location, reuses it, and routes the prepared ancillas through the ideal stabilizer network.  We use a calibrated rotating frame with the computational basis
aligned to the energy basis and the known Hamiltonian removed. The ideal energy-counting monitor records every emission and absorption event, including its time and direction, while pure dephasing is unmonitored. The experimentally accessible record and its measurement backaction are platform dependent. Detector loss is treated separately below.

For each location we use a time-homogeneous GKSL thermal generator~\cite{GoriniKossakowskiSudarshan1976,Lindblad1976}.  At location $i$, $p_e^{(i)}$ is the equilibrium excited-state population and $T_1^{(i)}$ is the relaxation time.  During an unsplit idle, coherence decays exponentially with time constant $T_2^{(i)}$.  We use the homogeneous coherence time, excluding unresolved quasistatic broadening.  At each location we take $0\leq p_e^{(i)}\leq1/2$, fixing the downward rate
as the larger one. An inverted bath exchanges the energy labels. Rates and exposure times may vary across the processor.  Each location carries the calibration number
\begin{equation}
 \chi_i\equiv(1-p_{e}^{(i)})\frac{T_{2}^{(i)}}{T_{1}^{(i)}}-1,
 \qquad
 \chi_{\max}\equiv\max_i\chi_i,
 \label{eq:chi}
\end{equation}
and for a single location we drop the index.  We call $\chi_i\leq0$ the simulable side and $\chi_i>0$ the resource side.  With the emission rate $\Gamma_\downarrow=(1-p_e)/T_1$ and the absorption
rate $\Gamma_\uparrow=p_e/T_1$, let $\Gamma_\phi$ denote the homogeneous pure-dephasing rate, so that $1/T_2=(\Gamma_\uparrow+\Gamma_\downarrow)/2+\Gamma_\phi$.  The condition $\chi>0$ reads $1/T_2<\Gamma_\downarrow$, and its boundary is equivalently $\Gamma_\phi=(\Gamma_\downarrow-\Gamma_\uparrow)/2$.  The condition requires the coherence factor to decay more slowly than the excited-state no-exchange survival weight.  Combining $\chi>0$ with the complete-positivity limit $T_2\leq2T_1$ gives the physical resource region $T_1/(1-p_e)<T_2\leq2T_1$, which is nonempty for $p_e<1/2$.

\begin{figure}[!t]
\centering
\includegraphics[width=1\columnwidth]{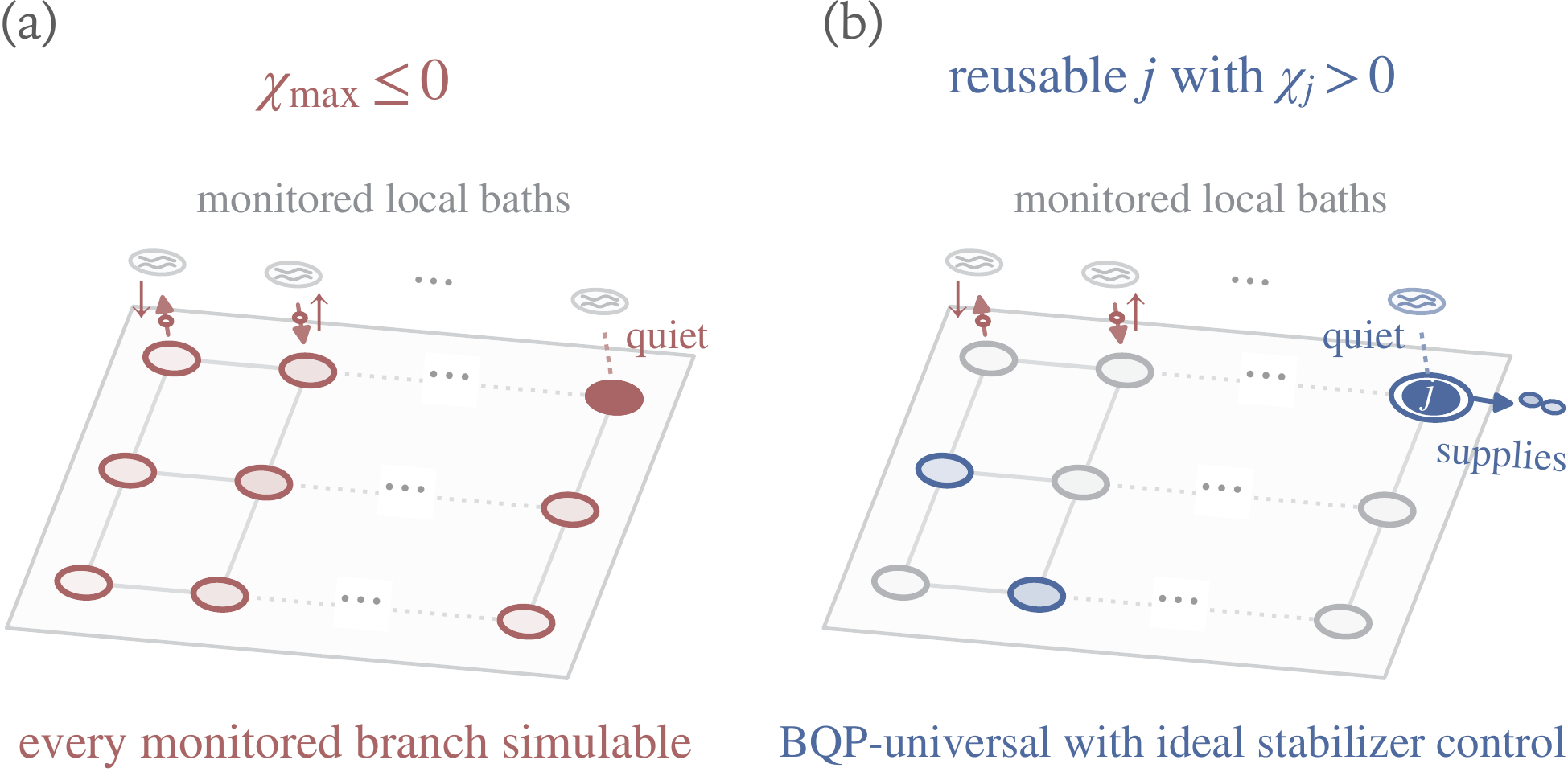}
\caption{The local calibrations determine the two sides of
Theorem~\ref{thm:closed}.  (a) Each thermal-idle location has its own $\chi_i$. If $\chi_{\max}\leq0$, every monitored call has a nonnegative stabilizer decomposition.  (b) If one repeatedly accessible location has $\chi_i>0$, its quiet branch provides distillable ancillas to the stabilizer processor.}
\label{fig:boundary}
\end{figure}

\begin{theorem}[Computational classification]\label{thm:closed}
Under the independent-bath model above, consider adaptive stabilizer circuits with polynomial-time classical control.
\par\noindent
(a) If $\chi_{\max}\leq0$, every record branch is a nonnegative sum of trace-nonincreasing stabilizer operations. Circuit outputs and full records admit a sampler exact in real arithmetic, with expected cost polynomial in the circuit size and in a uniform upper bound on the total monitored exposure in local-$T_2$ units.
\par\noindent
(b) A repeatedly accessible location with fixed calibrated parameters, $\chi>0$, and selectable exposure enables BQP-universal computation via quiet-branch preparation, with overhead polynomial in the circuit size and in $1/\chi$.
\end{theorem}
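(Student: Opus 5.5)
The plan is to unravel each monitored idle into its record branches and decompose every branch explicitly. Part (a) then follows from positivity of that decomposition plus Gottesman--Knill. Part (b) follows by showing that the quiet branch on $\ket{+}$ leaves the stabilizer octahedron exactly when $\chi>0$, and then invoking magic-state distillation.

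\emph{Branch structure.} For a single location and an unsplit exposure $t$, write $a=e^{-\Gamma_\uparrow t}$, $b=e^{-\Gamma_\downarrow t}$ and $c=e^{-t/T_2}$. The no-exchange (quiet) branch is the map
\[
Q_t:\ \rho_{gg}\mapsto a\rho_{gg},\quad \rho_{ee}\mapsto b\rho_{ee},\quad \rho_{ge}\mapsto c\rho_{ge}.
\]
The unmonitored pure dephasing only multiplies the coherence, so it is absorbed into $c$. This map can be rewritten as
\[
Q_t(\rho)=c\,\rho+(a-c)\kb{g}{g}\rho\kb{g}{g}+(b-c)\kb{e}{e}\rho\kb{e}{e}.
\]
Since $b\le a$ for $p_e\le 1/2$, all three coefficients are nonnegative iff $c\le b$, i.e. $1/T_2\ge\Gamma_\downarrow$, which is $\chi\le0$. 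Each term is a trace-nonincreasing stabilizer operation (identity or a $Z$-projector). I would also check that no other nonnegative stabilizer decomposition exists when $c>b$; this follows from the output on $\ket{+}$ computed in part (b), which leaves the octahedron.

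A detected emission or absorption at time $s$ acts as $\sigma_\mp$, i.e. a $Z$-measurement followed by a reset to a basis state. By the semigroup property, a full time-resolved record is an alternating composition of quiet segments $Q_{s_{k+1}-s_k}$ and jump maps. It is therefore a nonnegative sum of compositions of stabilizer operations. This holds locally, so it remains valid on entangled stabilizer inputs and under adaptive reuse, because each location's bath is independent.

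\emph{Sampler for (a).} Rather than expanding the sum, I would realize the averaged monitored process as a stochastic stabilizer process. A Poisson clock of rate at most $1/T_2$ (in local units) triggers events, each drawn from $\{\text{identity},\ Z\text{-projection},\ \text{jump}\}$. The rates are chosen so that, conditioned on the recorded jump times, the branch weights reproduce the decomposition of $Q_t$ above. Each event is simulated with Gottesman--Knill, and the Born probabilities are computed exactly as dyadic or real numbers.

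The main obstacle is normalization. The terms are trace-nonincreasing, and their total weight depends on the state, so naive sampling by coefficient would be biased. I would handle this by sampling the physical record first, using exact stabilizer outcome probabilities at each event. The expected number of events is then bounded by the total exposure in $T_2$ units, which gives the stated expected polynomial cost.

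\emph{Resource side (b).} Apply $Q_t$ to $\ket{+}$ and normalize. The resulting Bloch vector is $z=(a-b)/(a+b)$, $x=2c/(a+b)$, and the herald probability is $(a+b)/2$. The condition $x+z>1$ reduces to $c>b$, i.e. $\chi>0$. For small $t$ the excess is
\[
x+z-1\approx \chi\, t/T_1\cdot\mathcal{O}(1).
\]
Twirling with the Clifford stabilizer of the $H$ axis projects the state onto that axis, with $r_H=(x+z)/\sqrt2>1/\sqrt2$. Reichardt's tight-threshold $H$-type distillation then produces pure magic states, whose cost is polynomial in the inverse distance to the octahedron face. Choosing $t$ to maximize the margin, which is $\Theta(\chi)$ for fixed calibration, and combining with Clifford$+T$ universality gives overhead polynomial in $1/\chi$ and the circuit size. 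I expect the polynomial dependence of the near-threshold distillation cost to be the delicate point. I would first check whether Reichardt's scheme yields a polynomial rate in the margin, and otherwise use an initial amplification stage to push the state to a constant fidelity before standard Bravyi--Kitaev distillation.
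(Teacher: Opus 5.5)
Your overall route matches the paper's. Your decomposition $c\,\mathcal I+(a-c)\mathcal P_{0\to0}+(b-c)\mathcal P_{1\to1}$ is the same map as Eq.~\eqref{eq:quiet}, you treat jumps as $Z$-measure-and-reset, you use an event-driven tableau sampler, and on the resource side you use the quiet $\ket{+}$ branch with an $\{I,H\}$ twirl and Reichardt distillation. But the two steps that carry the theorem's quantitative content are left open.

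\textbf{The sampler in (a).} You call normalization ``the main obstacle'' but never specify the clock's event rates, and ``sampling the physical record first'' does not resolve it. Read literally, it needs the law of the full record across several simultaneously exposed, entangled rails, which time-ordered adaptive control requires. That law involves quantities like $\tr[\bigotimes_i\mathcal N^{(i)}_s(\rho)]$. These are weight-enumerator-type sums over the joint $Z$ statistics of all exposed rails, not single-rail marginals, and a tableau does not supply them efficiently in general.

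The missing observation is that the obstacle disappears at the generator level once the recorded jumps are included. Differentiating your decomposition gives
$\mathcal L_0=-\frac{1}{T_2}\mathcal I+(\frac{1}{T_2}-\Gamma_\uparrow)\mathcal P_{0\to0}+(\frac{1}{T_2}-\Gamma_\downarrow)\mathcal P_{1\to1}$.
Add the jump terms $\Gamma_\downarrow\mathcal P_{1\to0}+\Gamma_\uparrow\mathcal P_{0\to1}$, and let $\pi_z\in\{0,\frac12,1\}$ be the exposed rail's tableau population. The trace rates then sum to
$(\frac{1}{T_2}-\Gamma_\uparrow)\pi_0+(\frac{1}{T_2}-\Gamma_\downarrow)\pi_1+\Gamma_\downarrow\pi_1+\Gamma_\uparrow\pi_0=\frac{1}{T_2}$
for every state; compare Eq.~\eqref{sm:uniformrate}.

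So each rail carries a state-independent exponential clock of rate $1/T_2^{(i)}$. At a tick you measure $Z$ on that rail. On outcome $0$ you record an absorption and reset to $\ket{1}$ with probability $T_2\Gamma_\uparrow$; on outcome $1$ you record an emission and reset to $\ket{0}$ with probability $T_2\Gamma_\downarrow$. These are valid probabilities exactly when $\chi\le0$.

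Only single-rail marginals enter, and the rates add across rails. Summing the auxiliary events between recorded jumps reproduces the Dyson series of $e^{t\mathcal L_0}=\mathcal N_t$, and induction over events gives the joint law under feedforward. The number of tableau updates is then dominated by a Poisson variable whose mean is the exposure bound in local-$T_2$ units. That is where the stated cost bound comes from; your proposal asserts this bound without the identity that justifies it.

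\textbf{The overhead in (b).} You leave the polynomial dependence on $1/\chi$ as something to check, and your fallback is circular. The Bravyi--Kitaev protocols do not have tight thresholds. So the ``initial amplification stage'' that lifts a state sitting $\Theta(\chi)$ outside the octahedron edge to constant excess is exactly the near-threshold regime you need to control.

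The paper closes this quantitatively:
\begin{itemize}
\item Complete positivity forces $\chi\le1$, so at $t_{\rm q}^\star$ the margin is $\chi s_1^\star\ge\chi/4$.
\item The herald probability is at least $s_1^\star\ge1/4$; the paper shows it exceeds $1/e$.
\item After the twirl, the normalized coordinate obeys $x-\frac12\ge\chi/8$.
\item One Steane-code round $f(x)=x^3(7+8x^4)/(1+14x^4)$ has the threshold as a \emph{repelling} fixed point, $f'(\frac12)=\frac75$, with acceptance near $15/512$.
\item Hence the excess grows geometrically, and $O(\log(1/\chi))$ constant-cost rounds reach a fixed excess using $(1/\chi)^{O(1)}$ raw states.
\item Standard distillation and injection then add only polylogarithmic overhead in the circuit size.
\end{itemize}
The strict inequality $f'(\frac12)>1$ is the fact your argument needs. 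A marginal fixed point would not yield logarithmically many rounds, and the cost need not be polynomial.
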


When every $\chi_i\leq0$, the local decompositions give a classical sampler. A single repeatedly accessible location with $\chi_i>0$ can supply ancillas to the entire processor, so the classification depends on $\chi_{\max}$ [Fig.~\ref{fig:boundary}].  A classical polynomial-time sampler for the entire resource-side family at sufficiently small fixed total-variation error would therefore imply $\mathrm{BPP}=\mathrm{BQP}$.  In the standard energy-jump representation of the thermal master equation~\cite{DalibardCastinMolmer1992,PlenioKnight1998,Daley2014}, the quiet branch decomposes as
\begin{equation}
\begin{aligned}
 \mathcal N_t={}&e^{-t/T_2}\mathcal I
 +\bigl(e^{-\Gamma_\downarrow t}-e^{-t/T_2}\bigr)\Delta_Z\\
 &+\bigl(e^{-\Gamma_\uparrow t}-e^{-\Gamma_\downarrow t}\bigr)
  \mathcal P_{0\to0},
\end{aligned}
 \label{eq:quiet}
\end{equation}
where $\mathcal I$ is the identity channel, $\Delta_Z$ is complete dephasing, and $\mathcal P_{i\to j}(\rho)=\langle i|\rho|i\rangle\kb jj$.  All three coefficients are nonnegative at every $t>0$ if and only if the three exponents are ordered, $\Gamma_\uparrow\leq\Gamma_\downarrow\leq1/T_2$.  The first inequality is $p_e\leq1/2$ and the second is $\chi\leq0$.  Detected jumps are computational-basis measurements followed by resets, so Eq.~\eqref{eq:quiet} gives a nonnegative decomposition of every record branch.

The local decompositions tensor across simultaneous exposures and compose under adaptive reuse, including on entangled inputs. Continuous-time uniformization then gives an event-driven tableau sampler: on each exposed rail, two auxiliary stabilizer updates compete with the two recorded resets, and their nonnegative rates sum to $1/T_2^{(i)}$ independently of the state, so each rail carries a fixed exponential clock and rates add across rails. The tableau supplies the branching probabilities and updates, and the resulting process reproduces the joint law of circuit outputs and exchange records. The Supplemental Material (SM) gives the construction and the finite-precision guarantee for binned records.

\begin{figure*}[t!]
\centering
\includegraphics[width=\textwidth]{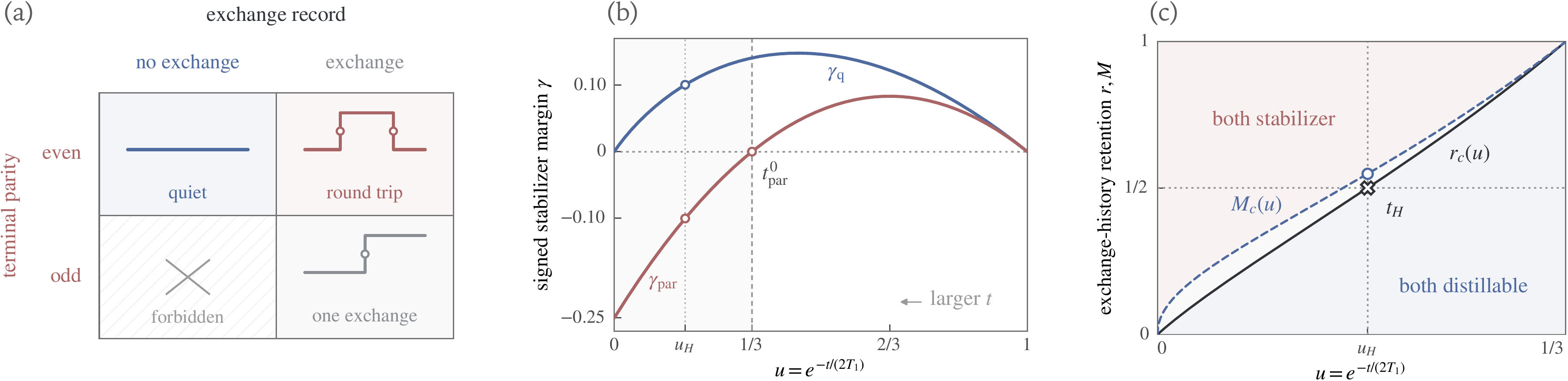}
\caption{(a) The exchange bit and the terminal-parity bit are incomparable: a history with even terminal parity can still contain exchanges, whereas odd parity implies at least one. (b) Signed stabilizer margins at $p_e=1/4$ and $T_2=2T_1$ versus $u=e^{-t/(2T_1)}$, with exposure increasing to the left.  The quiet margin $\gamma_{\rm q}$ stays positive at every finite exposure, while the terminal-parity margin $\gamma_{\rm par}$ crosses zero at $u=1/3$ ($t_{\rm par}^{0}=2T_1\ln3$).  Shading marks the window of
Theorem~\ref{thm:record}.  At $u_H=3-2\sqrt2$ the margins are equal and opposite. (c) Retention thresholds $r_c(u)$ [Eq.~\eqref{eq:rc}, solid] for controlled
acceptance and $M_c(u)$ (SM, dashed) for independent detector loss, with $M=m_\uparrow m_\downarrow$, on a common vertical axis. Below both curves the accepted branch is distillable under either rule, above both it has a nonnegative stabilizer decomposition, and in the gray region between them the two rules disagree. For $1/3<u<1$ the branch is distillable for every $r$ and every $M$, so both curves end at $u=1/3$.}
\label{fig:ladder}
\end{figure*}

\smallskip
\paragraph*{Reaching the resource side.}
To show that the boundary is tight, prepare $\ket+$ at a repeatedly accessible location and retain the runs with no energy exchange.  Because every other history is a reset branch, this preparation needs only the exchange bit.
The one-qubit stabilizer polytope is an octahedron in the Bloch ball.  The accepted output is nonstabilizer when it lies outside this octahedron.  We call the success-weighted signed excess over the relevant stabilizer facet the signed stabilizer margin $\gamma$.  Its positive part $[\gamma]_+$ is the success-weighted robustness surplus~\cite{HowardCampbell2017}.  For the quiet branch with $\gamma\leq0$, $-\gamma$ is the $\Delta_Z$ coefficient in Eq.~\eqref{eq:quiet}.

\begin{lemma}\label{lem:quietboundary}
For a monitored thermal idle of any duration $t>0$, the quiet branch applied to $\ket+$ succeeds with probability
$(e^{-\Gamma_\uparrow t}+e^{-\Gamma_\downarrow t})/2$ and has signed stabilizer margin
\begin{equation}
 \gamma_{\rm q}(t)=e^{-t/T_2}-e^{-\Gamma_\downarrow t},
 \label{eq:exchangegain}
\end{equation}
which is positive if and only if $\chi>0$.
\end{lemma}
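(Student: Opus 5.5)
The plan is to apply the quiet-branch decomposition Eq.~\eqref{eq:quiet} directly to $\rho_+=\kb++$ and read off the success probability and the Bloch vector of the accepted state. Each of the three channels has a simple action on $\rho_+$. First, $\mathcal I$ leaves $\rho_+$ unchanged. Second, $\Delta_Z$ maps it to $\mathbb 1/2$. Third, $\mathcal P_{0\to0}$ gives $\tfrac12\kb00$. The traces of these images are $1$, $1$, and $1/2$, so the unnormalized output has trace
\[
e^{-t/T_2}+\bigl(e^{-\Gamma_\downarrow t}-e^{-t/T_2}\bigr)+\tfrac12\bigl(e^{-\Gamma_\uparrow t}-e^{-\Gamma_\downarrow t}\bigr)=\tfrac12\bigl(e^{-\Gamma_\uparrow t}+e^{-\Gamma_\downarrow t}\bigr).
\]
This is the claimed success probability. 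As a cross-check, it should equal the average of the no-exchange survival weights of $\ket0$ (absorption rate $\Gamma_\uparrow$) and $\ket1$ (emission rate $\Gamma_\downarrow$) under the non-Hermitian effective Hamiltonian.

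Next I compute the unnormalized Bloch components. Only the identity term carries off-diagonal weight, so the $X$-component is $e^{-t/T_2}$ and the $Y$-component is $0$. The $Z$-component comes from the diagonal populations, $\tfrac12 e^{-\Gamma_\uparrow t}$ on $\ket0$ and $\tfrac12 e^{-\Gamma_\downarrow t}$ on $\ket1$, giving $\tfrac12(e^{-\Gamma_\uparrow t}-e^{-\Gamma_\downarrow t})\geq0$. Let $p$ be the success probability. The octahedron in the $x$--$z$ plane with $x,z\geq0$ is bounded by the facet $x+z\leq1$, which is the relevant stabilizer facet. In success-weighted form, the signed excess over this facet is (unnormalized $x$) $+$ (unnormalized $z$) $-\,p$. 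Substituting gives
\[
e^{-t/T_2}+\tfrac12\bigl(e^{-\Gamma_\uparrow t}-e^{-\Gamma_\downarrow t}\bigr)-\tfrac12\bigl(e^{-\Gamma_\uparrow t}+e^{-\Gamma_\downarrow t}\bigr)=e^{-t/T_2}-e^{-\Gamma_\downarrow t},
\]
which is Eq.~\eqref{eq:exchangegain}. This is consistent with the earlier remark that for $\gamma\leq0$ the quantity $-\gamma$ is the $\Delta_Z$ coefficient.

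For the sign condition, $\gamma_{\rm q}(t)>0$ at a given $t>0$ holds iff $1/T_2<\Gamma_\downarrow$, because the exponential is strictly monotone. By the reformulation stated after Eq.~\eqref{eq:chi}, this is exactly $\chi>0$, and the conclusion holds uniformly in $t$.

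The main obstacle is justifying that this single facet $x+z=1$ is the right one, so that the margin is a genuine criterion for nonstabilizerness: the state must lie outside the octahedron iff this margin is positive. Since $y=0$ and $x,z\geq0$, the state lies in the first quadrant of the $x$--$z$ plane, so this is the only facet it can cross. The other facets $\pm x\pm y\pm z\leq1$ are dominated because $y=0$ and the signs of $x$ and $z$ are fixed.
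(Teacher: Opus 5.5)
Your proof is correct and is essentially the paper's argument. The paper reads the unnormalized output (diagonal weights $e^{-\Gamma_\uparrow t}/2$ and $e^{-\Gamma_\downarrow t}/2$, coherence $e^{-t/T_2}/2$) directly from the no-jump matrix form, and uses the nonnegativity of the Bloch components to obtain the facet excess $e^{-t/T_2}-e^{-\Gamma_\downarrow t}$, exactly as you do. Routing the computation through Eq.~\eqref{eq:quiet} is legitimate: that three-term expression is an identity of linear maps for all parameter values, including the resource side where the $\Delta_Z$ coefficient is negative.
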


The unnormalized quiet output has diagonal weights
$e^{-\Gamma_\uparrow t}/2$ and $e^{-\Gamma_\downarrow t}/2$ and coherence $e^{-t/T_2}/2$.  Its nonzero Bloch components are nonnegative, giving the success-weighted facet excess in Eq.~\eqref{eq:exchangegain}.  On the resource side the coherence outlives the excited-state weight, and the output crosses an octahedron facet.  An $\{I,H\}$ twirl moves it onto the Hadamard axis, where
the distillation threshold meets that facet~\cite{Reichardt2005}.  The optimal quiet exposure is $t_{\rm q}^{\star}=T_2\ln(1+\chi)/\chi$, with margin
$\gamma_{\rm q}^{\star}=\chi(1+\chi)^{-1-1/\chi}$.  Near the boundary this margin approaches $\chi/e$, while the success probability at $t_{\rm q}^{\star}$ stays above $1/e$ across the physical parameter range.  The shrinking margin, rather than heralding, sets the overhead.

Above the boundary, standard distillation prepares an $H$-type ancilla a fixed distance outside the stabilizer octahedron with $(1/\chi)^{O(1)}$ monitored-idle calls.  Below the boundary, no adaptive reuse of the allowed operations can prepare such a state.  Injection completes the BQP construction with the polynomial overhead stated in Theorem~\ref{thm:closed}(b)~\cite{BravyiKitaev2005,GottesmanChuang1999}.  Keeping the exchange record prevents the quiet component from being averaged with reset branches.

\smallskip
\paragraph*{Erasing the record.}

To isolate the role of the exchange record, we sum over its outcomes and retain only the record-averaged channel $\Phi_t$.  A channel is stabilizer measure-and-prepare when its normalized Choi state is a convex mixture of
products of one-qubit stabilizer states.  Equivalently, it can be realized by a stabilizer POVM followed by stabilizer preparations.  An explicit affine decomposition is given in Eq.~\eqref{em:mp} in End Matter.

\begin{theorem}[Stabilizer measure-and-prepare threshold]\label{thm:mp}
For $p_e>0$, the equation $e^{-t/T_2}=p_e(1-e^{-t/T_1})$ has a unique positive solution $t_{\rm mp}$. The record-averaged channel $\Phi_t$ is stabilizer measure-and-prepare if and only if $t\geq t_{\rm mp}$.  At every such exposure, $\Phi_t^{\otimes n}$ maps every input to a mixture of product stabilizer states.
\end{theorem}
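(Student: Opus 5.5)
The plan is to reduce everything to the Bloch-affine form of the record-averaged channel and its two-qubit Choi state. Summing Eq.~\eqref{eq:quiet} together with all jump branches gives the generalized amplitude-damping-plus-dephasing channel $\Phi_t$. With $\lambda=e^{-t/T_2}$ and $\eta=e^{-t/T_1}$, it acts on the Bloch vector as $(x,y,z)\mapsto(\lambda x,\lambda y,\eta z+\tau)$, where $\tau=(1-2p_e)(1-\eta)$. Its Choi state has correlation coefficients $\lambda$ on $XX$, $-\lambda$ on $YY$, $\eta$ on $ZZ$, and output bias $\tau$ on $IZ$. The transition weights $\langle 1|\Phi_t(\kb00)|1\rangle=p_e(1-\eta)$ and $\langle 0|\Phi_t(\kb11)|0\rangle=(1-p_e)(1-\eta)$ are the diagonal populations available to pay for the coherence. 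The threshold equation then says that $\lambda$ equals the smaller one, $p_e(1-\eta)$, which is the smaller because $p_e\le 1/2$.

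\emph{Uniqueness of $t_{\rm mp}$.} The equation is equivalent to $g(t)\equiv e^{-t/T_2}/(1-e^{-t/T_1})=p_e$. The numerator of $g$ is strictly decreasing and the denominator is strictly increasing and positive, so $g$ is strictly decreasing. Moreover $g(t)\to\infty$ as $t\to0^+$ and $g(t)\to0$ as $t\to\infty$. For $p_e>0$ this gives exactly one crossing, with $\lambda\le p_e(1-\eta)$ precisely for $t\ge t_{\rm mp}$.

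\emph{Sufficiency.} For $t\ge t_{\rm mp}$ I will write $\Phi_t$ explicitly as a nonnegative combination of stabilizer measure-and-prepare maps; this is the affine decomposition announced as Eq.~\eqref{em:mp}. The coherence terms $\pm\lambda$ on $XX$ and $YY$ come from pairs such as ``measure $X$, prepare the matching $X$ eigenstate'' and ``measure $Y$, prepare the conjugate $Y$ eigenstate''. Each such pair also contributes maximally mixed diagonal weight. I will absorb that weight into the $Z$-basis measure-and-prepare terms $\mathcal P_{i\to j}$. The coefficients of the remaining $\mathcal P_{0\to1}$ and $\mathcal P_{1\to0}$ terms are then $p_e(1-\eta)-\lambda$ and $(1-p_e)(1-\eta)-\lambda$, up to the exact bookkeeping. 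Both are nonnegative exactly when $t\ge t_{\rm mp}$, and the diagonal terms $\mathcal P_{0\to0}$, $\mathcal P_{1\to1}$ stay nonnegative since $\eta+\tau\ge0$. Normalizing gives a Choi state that is a convex mixture of products of one-qubit stabilizer states.

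\emph{Necessity.} This is the step I expect to be hardest. I will look for a linear witness $W=\alpha(XX-YY)+\beta ZZ+\delta\, IZ+\epsilon\, ZI+\mu\, II$ that is bounded on all $36$ products of one-qubit stabilizer states. A product of pure stabilizer states has exactly one nonzero Pauli on each side, so its maximum over these products is a finite case check. I will choose the coefficients by LP duality, so that $W$ is tight on the facet whose slack vanishes at the coefficient $p_e(1-\eta)-\lambda$. Evaluating $W$ on the Choi state of $\Phi_t$ should give a violation proportional to $\lambda-p_e(1-\eta)$, which is positive for $t<t_{\rm mp}$. My first attempt is the witness obtained by twirling the Choi state under the symmetry group $\{\,Z\text{-phase},\ X\leftrightarrow Y\text{ with sign}\,\}$. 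That twirl reduces the membership problem to a low-dimensional polytope in the parameters $(\lambda,\eta,\tau)$, where the facet can be read off directly.

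\emph{Tensor powers.} A measure-and-prepare channel satisfies $\Phi_t(\rho)=\sum_k \tr(E_k\rho)\,\sigma_k$ with stabilizer $\sigma_k$. Hence $\Phi_t^{\otimes n}(\rho)=\sum_{\mathbf k}\tr\bigl((E_{k_1}\otimes\cdots\otimes E_{k_n})\rho\bigr)\,\sigma_{k_1}\otimes\cdots\otimes\sigma_{k_n}$. Positivity of the product POVM elements makes the weights nonnegative, and they sum to one. So every input, entangled or not, is mapped to a mixture of product stabilizer states.
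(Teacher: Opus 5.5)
Your uniqueness argument for $t_{\rm mp}$ is correct. Your sufficiency decomposition is also correct: regrouping your $\mathcal P_{i\to j}$ terms into $\Delta_Z$, $\mathcal P_{\to0}$, $\mathcal P_{\to1}$ reproduces Eq.~\eqref{em:mp}. The tensor-power argument matches the paper as well. The gap is necessity, which is the whole content of the ``only if'': you never exhibit a witness, and the reduction you sketch would miss the sharp one.

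In your notation ($\lambda=e^{-t/T_2}$, $\eta=e^{-t/T_1}$, $\tau$ for the paper's $k$), project the 36 product stabilizer states onto $u=\tfrac12\langle XX-YY\rangle$, $v=\langle ZZ\rangle$, $w=\langle IZ\rangle$. The image is the convex hull of $(\pm\tfrac12,0,0)$ and $(0,\pm1,\pm1)$, namely $2|u|+\max\{|v|,|w|\}\le1$. A witness built only from $II$, $XX-YY$, $ZZ$, $IZ$ can therefore detect only $2\lambda+\max\{\eta,\tau\}>1$. The true criterion is $2\lambda+\eta+\tau>1$, which is equivalent to $\lambda>p_e(1-\eta)$ because $\eta+\tau=1-2p_e(1-\eta)$. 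These differ at physical points. At $p_e=1/4$, $T_2=2T_1$, $e^{-t/(2T_1)}=1/4$, one has $2\lambda+\max\{\eta,\tau\}=31/32$ but $2\lambda+\eta+\tau=33/32$. The reference-marginal coordinate $\langle ZI\rangle$ is invariant under your twirl and must be kept. The Choi state sits on the slice $\langle ZI\rangle=0$, and the threshold is a facet of the full polytope cut by that slice, not of its projection onto $(\lambda,\eta,\tau)$.

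The missing certificate is the paper's $W_{\rm mp}=XX-YY+IZ+ZZ-ZI$ (reference first), i.e.\ $\alpha=\beta=\delta=1$, $\epsilon=-1$, $\mu=0$ in your ansatz. For reference and output Pauli axes $\vec a,\vec c$,
$\langle W_{\rm mp}\rangle=a_xc_x-a_yc_y+c_z+a_zc_z-a_z$. This equals:
$\pm c_x+c_z$ or $\mp c_y+c_z$ for equatorial $\vec a$ (each at most $1$);
$2c_z-1$ for $\vec a=\hat z$;
and $1$ for $\vec a=-\hat z$.
So $\langle W_{\rm mp}\rangle\le1$ on the whole product-stabilizer hull. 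The $-ZI$ term is what lowers $\ket0\otimes\ket0$ from $2$ to $1$. On the Choi state, $\langle W_{\rm mp}\rangle=2\lambda+\eta+\tau=1+2[\lambda-p_e(1-\eta)]$, which exceeds $1$ exactly for $t<t_{\rm mp}$.

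One minor caution for sufficiency. In channel terms the $Y$ piece is $\lambda\Delta_Y$, which measures $Y$ and prepares the \emph{matching} eigenstate. The anticorrelation $\langle YY\rangle=-\lambda$ arises only from the transpose in the Choi picture. Preparing the opposite eigenstate would instead send $y\mapsto-\lambda y$.
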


Primitive quantum Markov semigroups eventually become entanglement breaking~\cite{HansonRouzeFranca2020}.  The present decomposition also removes nonstabilizerness from arbitrary many-qubit inputs, a conclusion that does not follow from one-qubit magic breaking~\cite{Patra2024}.  The stabilizer measure-and-prepare, entanglement-breaking, and magic-breaking thresholds are in general distinct.  At the parameters of Fig.~\ref{fig:ladder}, there is an exposure window in which the channel is both magic breaking and entanglement breaking yet not completely stabilizer preserving.  It produces no magic from any single-qubit input, but acting on one rail of an ebit it leaves a separable state outside the stabilizer polytope (SM).

Stabilizer measure-and-prepare channels are in particular completely stabilizer preserving, so on the resource side, for $t\geq t_{\rm mp}$, repeated uses of $\Phi_t$ remain classically simulable within an adaptive stabilizer processor when the record is erased.  At the same exposure, retaining the exchange bit and conditioning on no exchange yields distillable states with success probability and margin independent of processor width.  Because $t_{\rm mp}$ is finite only for $p_e>0$, this contrast occurs only at finite temperature.  At small $p_e$, however, it appears only at late exposures where the quiet margin is already well below its optimum.

\smallskip

\paragraph*{Comparing records.}

We compare two one-bit readouts of the same monitored experiment: the exchange bit and terminal parity. Retaining only the exchange bit preserves the one-use stabilizer-assisted yield available from the full time-resolved record, because every resolved jump branch is diagonal (End Matter). As a second one-bit readout, we record the terminal $ZZ$ parity of a Bell pair after exposing one rail to the thermal idle, and Clifford-decode the even sector. This terminal-parity readout has margin
\begin{equation}
 \gamma_{\rm par}(t)=e^{-t/T_2}-p_e
 -(1-p_e)e^{-t/T_1}.
 \label{eq:endgain}
\end{equation}
Both readouts retain one bit, but not the same information, as shown in Fig.~\ref{fig:ladder}(a).  A completed thermal round trip restores the terminal parity after its first jump has removed the quiet-branch coherence. Terminal parity accepts such a history, whereas the exchange bit rejects it.  Conversely, a history containing an exchange may end in either parity sector.  For $p_e>0$, neither bit can be obtained from the other by classical postprocessing~\cite{LeppajarviSedlak2021}.

Both bits can be acquired in one experiment and then discarded selectively.  Conditioning on no exchange gives the margin $\gamma_{\rm q}$. If the exchange record is erased while terminal parity is retained, the margin becomes $\gamma_{\rm par}$. Decoding both parity sectors and erasing both bits leaves the same state as the record-averaged bare $\ket+$ output, with margin
\begin{equation}
\gamma_{\varnothing}(t)
=
e^{-t/T_2}
-2p_e
-(1-2p_e)e^{-t/T_1}.
\end{equation}
These margins obey
$\gamma_{\rm q}\geq\gamma_{\rm par}\geq\gamma_{\varnothing}$ (SM).

The bare-output margin has initial slope $\chi_{\varnothing}/T_2$ and becomes positive at some exposure if and only if $\chi_{\varnothing}>0$, where
\begin{equation}
\chi_{\varnothing}
\equiv
(1-2p_e)\frac{T_2}{T_1}-1
=
\left.\chi\right|_{p_e\to2p_e}.
\end{equation}
In contrast, both one-bit readouts admit a positive margin for some exposure if and only if $\chi>0$, so they share the existence boundary $\chi=0$, which at finite temperature is less restrictive than the bare-probe boundary $\chi_{\varnothing}=0$.  With an ebit probe and a free terminal-parity measurement, the same record-averaged channel has maximum one-use yield $[\gamma_{\rm par}]_+$ and retains the $\chi=0$ boundary for selectable exposures (End Matter).

\begin{theorem}[Record dependence at fixed exposure]\label{thm:record}
Prepare the Bell-pair probe described above and let one rail undergo a single
thermal idle that is not subdivided.  Let $0<p_e<1/2$ and $\chi>0$, and let
$t_{\rm par}^{0}$ denote the unique positive root of
$\gamma_{\rm par}(t)=0$.  For every fixed finite exposure $t\geq t_{\rm par}^{0}$,
\begin{equation}
 \gamma_{\rm q}(t)>0\geq\gamma_{\rm par}(t),
 \label{eq:recordswitch}
\end{equation}
and every branch of the terminal-parity instrument then has a nonnegative stabilizer decomposition.  Its repeated adaptive use admits efficient classical sampling, whereas the quiet branch of the monitored exposure heralds a distillable state.
\end{theorem}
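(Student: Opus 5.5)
The proof splits into three parts: the inequality chain in Eq.~\eqref{eq:recordswitch}, simulability of the terminal-parity branches, and distillability of the quiet branch.

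\emph{The inequality chain.} The strict inequality $\gamma_{\rm q}(t)>0$ at every finite $t>0$ is exactly Lemma~\ref{lem:quietboundary}, since $\chi>0$ means $1/T_2<\Gamma_\downarrow$.

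For the parity margin I will write $a=e^{-t/T_2}$, $b=e^{-t/T_1}$ and $f(t)=\gamma_{\rm par}(t)=a-p_e-(1-p_e)b$. At $t=0$ we have $f=0$. The initial slope is $f'(0)=-1/T_2+(1-p_e)/T_1=\chi/T_2>0$. As $t\to\infty$, $f\to-p_e<0$, so a positive root $t_{\rm par}^0$ exists.

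For uniqueness and sign control beyond the root, I would show that $f$ changes sign only once on $(0,\infty)$. Note $f'(t)=-a/T_2+(1-p_e)b/T_1$. Because $T_2\leq 2T_1$, the ratio $b/a=e^{-t(1/T_1-1/T_2)}$ is monotone in $t$, so $f'$ has at most one zero and $f$ is unimodal: increasing, then decreasing (if $T_2>T_1$), or monotone otherwise. The case $T_2\le T_1$ with $f'(0)>0$ would force $f'>0$ forever, contradicting the limit $-p_e$. So unimodality plus $f(0)=0$, $f'(0)>0$ and $f(\infty)<0$ gives a unique positive root $t_{\rm par}^0$ with $f\le0$ for all $t\ge t_{\rm par}^0$. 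The one delicate point is the sign of $1/T_1-1/T_2$, and the case split above handles it.

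\emph{Nonnegative decomposition of the terminal-parity branches.} I will compute the terminal-parity instrument explicitly via the Choi state of the record-averaged channel $\Phi_t$ on one Bell rail. This Choi state is an X-state whose diagonal is fixed by the thermal populations and whose only coherence is $a/2$ between $|00\rangle$ and $|11\rangle$.
\begin{itemize}
\item[(i)] The odd-parity branch has no coherence. It is diagonal, hence a nonnegative mixture of computational-basis product states.
\item[(ii)] The even-parity branch, after Clifford decoding (CNOT), becomes $|0\rangle\langle0|\otimes\sigma$. Here $\sigma$ is an unnormalized qubit state whose Bloch components along $X$ and $Z$ are nonnegative.
\item[(iii)] When $\gamma_{\rm par}\le0$, $\sigma$ lies inside the octahedron. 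Concretely, $\sigma=a\,(\text{coherent part})+\ldots$ decomposes as a nonnegative combination of $|+\rangle\langle+|$, $|0\rangle\langle0|$, $|1\rangle\langle1|$ and $I/2$, with the $\Delta_Z$-type coefficient equal to $-\gamma_{\rm par}\ge0$, mirroring Eq.~\eqref{eq:quiet}.
\end{itemize}
At the level of the instrument, I lift this to a decomposition of each branch map as a nonnegative sum of trace-nonincreasing stabilizer operations. Each branch is then a nonnegative combination of: the identity, complete dephasing, $Z$-measure-and-reset maps $\mathcal P_{i\to j}$, and parity projections. The coefficients are $a$, $b$-type populations, $p_e(1-b)$ and $-\gamma_{\rm par}$. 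This lets me reuse the sampler of Theorem~\ref{thm:closed}(a) verbatim. Repeated adaptive use therefore admits efficient exact sampling, since the Bell pair, the CNOT decoding and the parity measurement are all stabilizer operations.

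\emph{Quiet-branch distillability.} This follows from Lemma~\ref{lem:quietboundary} and the $\{I,H\}$ twirl argument after it. Its success probability is bounded below by $e^{-\Gamma_\downarrow t}$, which is independent of the processor.

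The main obstacle I expect is (iii): verifying that the branch \emph{map}, not merely its action on the Bell probe, decomposes nonnegatively with coefficient $-\gamma_{\rm par}$. If the general-input decomposition fails, I would fall back on the Choi-state decomposition of the branch operator, which suffices because the instrument is only ever applied to the fixed Bell probe.
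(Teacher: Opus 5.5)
Your proposal is correct, and the inequality chain is the paper's own argument: the single-peak analysis of $e^{t/T_2}\dot\gamma_{\rm par}$ in the SM. Two small corrections there. The case split is vacuous, because $\chi>0$ already forces $T_2>T_1/(1-p_e)>T_1$. And $b/a$ is monotone for any $T_1,T_2$, so $T_2\le 2T_1$ plays no role.

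The simulable half is where your route differs. The paper works at the level of the channel. Since $\chi>0$ gives $\eta>b$, Eq.~\eqref{em:channeldecomposition} writes $\Phi_t$, on arbitrary inputs, as a convex mixture of the trace-preserving stabilizer channels $\mathcal I,\mathcal X,\mathcal Y,\mathcal P_{\to0},\mathcal P_{\to1}$ whenever $\gamma_{\rm par}\le0$. Appending the stabilizer $ZZ$ measurement then decomposes both branches at once.

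You work at the level of the probe output, which already suffices for the fixed-probe statement. The obstacle you flag in (iii) is not real, because the Bell pair is the Choi probe. Adding your two sector decompositions writes $J_{\Phi_t}$ as $\eta$ times the Bell state plus computational product states. Under the Choi correspondence this is exactly the map identity
\begin{align*}
\Phi_t={}&\eta\,\mathcal I+(k-\gamma_{\rm par})\mathcal P_{0\to0}-\gamma_{\rm par}\mathcal P_{1\to1}\\
&+(1-p_e)(1-b)\,\mathcal P_{1\to0}+p_e(1-b)\,\mathcal P_{0\to1},
\end{align*}
which is Lemma~\ref{lem:branch} applied to the fully coarse-grained branch. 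It holds on every input.

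Compared with the paper's decomposition, this form has trade-offs:
\begin{itemize}
\item It needs only $\gamma_{\rm par}\le0$, not $\eta\ge b$, and no $\Delta_Z$ term.
\item Its terms are trace nonincreasing, so a sampler must read tableau populations to choose a term. The paper's Pauli-and-reset mixture is sampled by a single state-independent draw.
\end{itemize}

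Either map-level form covers adaptively chosen inputs on the exposed rail, such as the even-parity-code inputs mentioned in End Matter; a probe-output statement alone does not. What you then invoke is closure of nonnegative stabilizer decompositions under adaptive composition, rather than the Theorem~\ref{thm:closed}(a) sampler ``verbatim''.
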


In the limit of long exposure, the record-averaged channel prepares the thermal state, and the terminal-parity probe gives $\gamma_{\rm par}\to-p_e$.  Conditioning on no exchange removes the reset histories, so $\gamma_{\rm q}(t)>0$ for all $0<t<\infty$ when $\chi>0$ [Fig.~\ref{fig:ladder}(b)].  The quantity $e^{-t/T_2}-p_e(1-e^{-t/T_1})$ of Theorem~\ref{thm:mp} equals $\gamma_{\rm par}(t)+e^{-t/T_1}$, which is $e^{-t_{\rm par}^{0}/T_1}>0$ at $t=t_{\rm par}^{0}$, so the measure-and-prepare threshold lies later, $t_{\rm mp}>t_{\rm par}^{0}$.

On the relaxation-limited line $T_2=2T_1$,
$t_{\rm par}^{0}=2T_1\ln[(1-p_e)/p_e]$.
As $p_e\to0$, the separation moves to late exposures and its quiet margin vanishes linearly in $p_e$, whereas the optimal preparation time approaches $2T_1\ln2$ (SM). Terminal parity requires a single $ZZ$ measurement after the idle. The exchange bit requires monitoring throughout, although no event times or directions need be stored.

A controlled acceptance rule within the even sector interpolates continuously between the two readouts.

\begin{corollary}\label{cor:acceptance}
In the setting of Theorem~\ref{thm:record}, use the same Bell-pair probe and retain the even terminal-parity sector.  Within that sector, accept every zero-exchange history and independently accept each history containing an
exchange with probability $0\leq r\leq1$.  The accepted branch heralds a distillable state for all $r$ below the threshold
\begin{equation}
 r_c(t)=\frac{\gamma_{\rm q}(t)}
 {\gamma_{\rm q}(t)-\gamma_{\rm par}(t)}.
 \label{eq:rc}
\end{equation}
At or above $r_c$, the accepted branch has a nonnegative stabilizer decomposition.  At $p_e=1/4$, $T_2=2T_1$, and
$t=t_H\equiv4T_1\ln(1+\sqrt2)$, the two margins are equal and opposite, so $r_c=1/2$.
\end{corollary}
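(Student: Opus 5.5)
The plan is to exploit linearity: the accepted (unnormalized, decoded) branch is a convex combination of the two branches whose margins we already know. Within the even terminal-parity sector, split the unnormalized decoded output into two pieces. The first is $\sigma_0$, the zero-exchange histories; the second is $\sigma_{\rm ex}$, the even-parity histories containing at least one exchange.

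First I will check that zero-exchange histories all lie in the even sector. A quiet history leaves the $ZZ$ parity of the Bell pair unchanged, so after Clifford decoding $\sigma_0$ reproduces the quiet-branch output on $\ket+$. Its signed margin is therefore $\gamma_{\rm q}(t)$ by Lemma~\ref{lem:quietboundary}.

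The full even sector is $\sigma_0+\sigma_{\rm ex}$, and it has margin $\gamma_{\rm par}(t)$ by Eq.~\eqref{eq:endgain}. The accepted branch is
\[
\sigma_r=\sigma_0+r\,\sigma_{\rm ex}=(1-r)\sigma_0+r(\sigma_0+\sigma_{\rm ex}).
\]

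Next I need the margins to combine linearly. The signed margin is the success-weighted excess over a fixed octahedron facet, and this quantity is linear in the unnormalized state provided the same facet is relevant throughout. Here the relevant facet is the one normal to $(1,0,1)/\sqrt2$-type directions after the $\{I,H\}$ twirl. Every piece has nonnegative $X$ coherence and diagonal weights of the same sign pattern, so the same facet is selected for all $r$. The key step, which I expect to be the main obstacle, is confirming that the facet functional is genuinely linear on this family. One concern is that the normalization of the weights enters only as the success weight, so the success-weighted excess is $\tr[F\sigma]$ for a fixed Hermitian $F$. The other concern is that the facet choice does not switch as $r$ varies; I will check this using the sign structure of the Bloch components of $\sigma_0$ and $\sigma_{\rm ex}$.

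With linearity in hand, $\gamma(r)=(1-r)\gamma_{\rm q}+r\gamma_{\rm par}$. Since $\gamma_{\rm q}>0$ when $\chi>0$ (Lemma~\ref{lem:quietboundary}), the expression for $\gamma(r)$ is positive exactly for $r<r_c$, with $r_c$ given by Eq.~\eqref{eq:rc}. When $\gamma_{\rm par}>0$ we have $r_c>1$, and every $r$ gives a distillable state.

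For distillability when $\gamma(r)>0$, I reuse the argument after Lemma~\ref{lem:quietboundary}: the $\{I,H\}$ twirl places the state on the Hadamard axis, where the distillation threshold coincides with the facet~\cite{Reichardt2005}, so any positive facet excess suffices.

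For $r\geq r_c$, with $\gamma(r)\le0$, I need a nonnegative stabilizer decomposition. The decoded single-qubit state has nonnegative $X$ and $Z$ Bloch components with no $Y$ component, and it lies on or inside that octahedron facet. It is therefore a convex combination of $\ket0,\ket1,\ket\pm$. Explicitly, I will write $\sigma_r=a\,\mathcal I$-part $+$ dephased part with $a=\tr$-weighted coherence and the remainder diagonal, mirroring the identity-plus-$\Delta_Z$-plus-reset form of Eq.~\eqref{eq:quiet}. Nonnegativity of every coefficient is exactly $-\gamma(r)\ge0$ together with the positivity of the diagonal entries. At the branch (instrument) level, the same combination applies to the processes: the quiet branch is given by Eq.~\eqref{eq:quiet}, and the exchange branches are resets, so the accepted instrument element is a nonnegative sum of stabilizer operations.

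Finally, I will evaluate the specific point. With $p_e=1/4$ and $T_2=2T_1$, set $u=e^{-t/(2T_1)}$; then $\gamma_{\rm q}=u-u^{3/2}\cdot$ and $\gamma_{\rm par}$ become polynomials in $\sqrt u$. Imposing $\gamma_{\rm q}=-\gamma_{\rm par}$ should give a quadratic whose relevant root is $u_H=3-2\sqrt2=(\sqrt2-1)^2$. This root corresponds to $t_H=4T_1\ln(1+\sqrt2)$, and substituting $\gamma_{\rm par}=-\gamma_{\rm q}$ into Eq.~\eqref{eq:rc} gives $r_c=1/2$.
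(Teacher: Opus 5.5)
Your proposal follows the paper's own route. The accepted branch adds the quiet coefficients to $r$ times the coherence-free even-parity exchange coefficients, and the minimum in $c-\min\{A,D\}$ stays on the $\ket1$ weight at both endpoints, so the margin is $(1-r)\gamma_{\rm q}+r\gamma_{\rm par}$. Lemma~\ref{lem:branch} then gives both directions, and at $u_H=3-2\sqrt2$ direct substitution gives $\gamma_{\rm q}=-\gamma_{\rm par}=10-7\sqrt2$; your matching condition is in fact a quartic in $\sqrt u$, whose relevant factor is $u+2\sqrt u-1$.

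The one step to tighten is the map-level decomposition. On the resource side Eq.~\eqref{eq:quiet} has the negative $\Delta_Z$ weight $-\gamma_{\rm q}$, so you cannot obtain a nonnegative sum just by adding it to the reset terms. Instead, regroup the accepted map as $\eta\mathcal I+(D_r-\eta)\Delta_Z+(A_r-D_r)\mathcal P_{0\to0}$, with $D_r=s_1+r[p_e+(1-p_e)b-s_1]$ and $A_r=s_0+r[1-p_e+p_eb-s_0]\geq D_r$. Then $D_r-\eta=-\gamma(r)\geq0$ is exactly the round-trip $\mathcal P_{1\to1}$ weight absorbing the deficit.
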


We show $r_c$ as a function of exposure at these
parameters in Fig.~\ref{fig:ladder}(c).  At $p_e=0$, all
three margins coincide at $e^{-t/T_2}-e^{-t/T_1}$ and share the boundary $T_2=T_1$.  For finite $p_e$ and $\chi\ll p_e$, the optimized terminal-parity margin is quadratic in $\chi$, whereas the quiet margin opens linearly.

\smallskip
\paragraph*{Detector loss.}
Imperfect monitoring gives the second interpolation in Fig.~\ref{fig:ladder}(c). A no-click record can contain missed exchanges, so \emph{quiet} still denotes a physical zero-exchange history.  Let $m_\uparrow,m_\downarrow$ be the miss probabilities for absorption and emission.

\begin{figure}[!t]
\centering
\includegraphics[width=0.95\columnwidth]{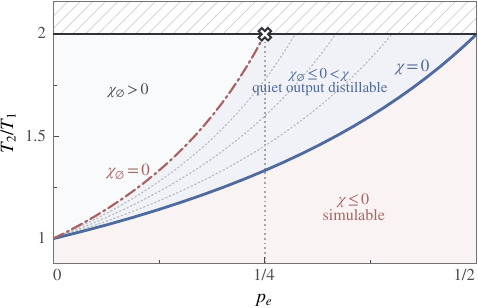}
\caption{Calibration plane for independent local baths.  The boundary uses
the homogeneous $T_2$ of the model.  The lines show the monitored boundary $\chi=0$, the bare-$\ket+$ short-time line $\chi_{\varnothing}=0$, and the complete-positivity limit $T_2=2T_1$.  The faint lines give the zero-initial-slope condition without terminal parity for absorption-miss
probabilities $m_\uparrow=1/4,1/2,3/4$.  The cross marks
$(p_e,T_2/T_1)=(1/4,2)$, the parameters of
Corollary~\ref{cor:acceptance}.}
\label{fig:map}
\end{figure}

With terminal parity, a no-click history with a single missed jump ends in the odd sector and is rejected.  The first accepted error is a complete round trip with both clicks missed, so independent losses enter only through $M=m_\uparrow m_\downarrow$ and only at second order in exposure.  The linear term remains $\chi/T_2$, and the boundary stays at $\chi=0$ for every $M$.  For $p_e>0$, larger $M$ lowers the margin and narrows
the useful exposure range.  If either direction is monitored perfectly, $M=0$ and the ideal quiet branch is recovered at every exposure.  Perfect emission monitoring with terminal parity suffices when absorption is unmonitored, since a no-click history then contains at most one absorption, which the even sector rejects. Erasure-conversion architectures provide related event flags~\cite{Kubica2023,Ma2023,Levine2024}; realizing this instrument also requires the corresponding measurement backaction.

At $p_e=1/4$, $T_2=2T_1$, and $t=t_H$, a positive margin requires $M<M_c\simeq0.548$.  Equal-efficiency detectors then need efficiency above $0.26$ in each direction.  At $50\%$ efficiency the margin is $0.057$, about $57\%$ of the ideal value.  Since only $M$ enters, the efficiencies need not be balanced.

Without terminal parity, one missed absorption changes the linear term.  For asymmetric loss, the initial slope is positive if and only if
\begin{equation}
 (1-m_\uparrow)\chi+m_\uparrow\chi_{\varnothing}
 =\left.\chi\right|_{p_e\to(1+m_\uparrow)p_e}>0.
 \label{eq:missedline}
\end{equation}
Missed absorption acts like additional excited-state population, moving the short-time line of the product $\ket+$ probe from $\chi=0$ at $m_\uparrow=0$ to $\chi_{\varnothing}=0$ at $m_\uparrow=1$ [Fig.~\ref{fig:map}].  At finite exposure the margin can also depend on $m_\downarrow$ and is nonincreasing in either miss probability. Finite-time dynamics, dark counts, and terminal-parity readout errors are treated in the SM.

\smallskip
\paragraph*{Calibration and controls.}
For scale, the $g$--$e$ transmon parameters reported in
Ref.~\cite{Elder2020} are $T_1\simeq51\,\mu\mathrm{s}$,
$T_2\simeq74\,\mu\mathrm{s}$, and $p_e\simeq0.4\%$.  The quoted $T_2$ was measured with a Ramsey sequence rather than an echo protocol (see Acknowledgments).  Using this free-induction time as a proxy in Eq.~\eqref{eq:chi} gives $\chi\simeq0.45$.  Within the model, the quiet branch is then optimized near $t_{\rm q}^{\star}\simeq1.20T_1$, succeeds in about two runs in three, and has margin $0.135$.  A second transmon calibration~\cite{Goldblatt2024}, using a Hahn-echo time as a proxy, gives $\chi\simeq0.007$ with a coherence-time uncertainty of about $0.015$ in $\chi$, leaving the sign unresolved (SM). The boundary thus lies within the calibration uncertainty of a current device.

The terminal-parity margin can be measured without full process tomography. The Bell-pair facet witness in the SM has expectation $1+2\gamma_{\rm par}$ and requires five two-qubit Pauli settings.  The same settings check phase-frame alignment.  Varying the acceptance probability
$r$ locates $r_c$.  A midpoint $X$ echo exchanges the survival weights and places both margins on or inside the stabilizer facet, giving a direct control experiment within the model (SM).

\smallskip
\paragraph*{Discussion and outlook.}
Theorem~\ref{thm:closed} classifies the complete monitored instrument: below the boundary every branch admits a nonnegative stabilizer decomposition, while above it monitoring separates the quiet branch, which supplies distillable states, from stabilizer reset histories. At finite temperature, thermal round trips make this distinction inaccessible to terminal parity alone. The resulting computational separation occurs at a fixed, unsplit exposure; allowing shorter idles restores the same existence boundary for the record-averaged channel with stabilizer ancillas.

The classification applies to instruments built from unsplit energy-counting idles in independent Markov baths with known energy axes.  Diffusive records~\cite{CampagneIbarcq2016}, feedback on continuous
signals~\cite{KarmakarEtAl2026}, intermediate parity checks, and alternative monitored unravelings define different instruments and require separate classifications~\cite{ChengIppoliti2023,BarchLidar2026}.  The locality of Eq.~\eqref{eq:chi} comes from bath factorization.  For two qubits sharing a common thermal bath (SM), any nonzero bath correlation opens a positive short-time margin even when both local calibration numbers obey $\chi_i\leq0$.  Local values of $T_1$, $T_2$, and $p_e$, defined from the diagonal rates and local dephasing, do not reveal this collective route, and no analogue of $\chi_{\max}$ is known for general correlated baths.

Related event records are already exposed in erasure architectures that flag relaxation and leakage~\cite{Wu2022,Kubica2023,Ma2023,Scholl2023,Levine2024}.  Below the local boundary such records can inform decoders.  Above it the exchange bit can herald states for injection, complementing fault-tolerant injection schemes that exploit erasure flags~\cite{Jacoby2025}. The accessible record and bath structure belong in the processor specification alongside $T_1$, $T_2$, $p_e$, and the reduced channel.

\smallskip
\smallskip
\paragraph*{Acknowledgments.}
\begin{acknowledgments}
We thank Christopher S. Wang for clarifying the Ramsey protocol used to measure the $T_2$ value reported in Table~S1 of Ref.~\cite{Elder2020}. C.C. was supported by the HKIQST Joint Post-doctoral Fellowship. Q.Z. acknowledges funding from Quantum Science and Technology-National Science and Technology Major Project 2024ZD0301900, National Natural Science Foundation of China (NSFC) via Project No. 12347104 and No. 12305030, Hong Kong Research Grant Council (RGC) via No. 27300823, 17310926, N\_HKU718/23, and R6010-23.
\end{acknowledgments}

\bibliographystyle{apsrev4-2}
\bibliography{refs}

\section*{End Matter}

\paragraph*{Phase-covariant decomposition.}
For the completely positive phase-covariant maps used here, write $\Phi:(x,y,z)\mapsto(\eta x,\eta y,bz+k)$ in the calibrated phase frame, with $\eta\geq0$ and $0\leq b\leq1$.  Conjugating both the input and output
by $X$ changes the sign of the displacement and preserves every stabilizer property, so we take $k\geq0$.  Here $\mathcal I,\mathcal X$, and $\mathcal Y$ denote conjugation by $I,X$, and $Y$.  A terminal-parity check
applies $\Phi$ to one rail of a stabilizer ebit, measures $ZZ$, and decodes the even sector.  This outcome occurs with probability $(1+b)/2$ and has logical Bloch vector $(2\eta,0,k)/(1+b)$.  Its signed stabilizer margin is
\begin{equation}
 \gamma_{\rm par}\equiv\frac{2\eta+k-1-b}{2},
 \label{em:generalparmargin}
\end{equation}
which reduces to Eq.~\eqref{eq:endgain} for the thermal channel.  Let $\mathcal P_{\to j}=\mathcal P_{0\to j}+\mathcal P_{1\to j}$ denote the unconditional reset $\mathcal P_{\to j}(\rho)=\tr(\rho)\kb jj$, so that
$\Delta_Z=\mathcal P_{0\to0}+\mathcal P_{1\to1}$.
When $\eta\geq b$ and $\gamma_{\rm par}\leq0$,
\begin{align}
 \Phi={}&\eta\mathcal I
 +\frac{\eta-b}{2}
 (\mathcal X+\mathcal Y)\nonumber\\
 &+(k-\gamma_{\rm par})\mathcal P_{\to0}
 -\gamma_{\rm par}\mathcal P_{\to1}.
 \label{em:channeldecomposition}
\end{align}
All weights are nonnegative.  As in Eq.~\eqref{eq:quiet}, the negative of the margin appears directly as a decomposition weight.  If $\eta<b$, the corresponding decomposition is
\begin{align*}
 \Phi={}&\eta\mathcal I
 +(b-\eta)\Delta_Z\nonumber\\
 &+\frac{1-b+k}{2}\mathcal P_{\to0}
 +\frac{1-b-k}{2}\mathcal P_{\to1}.
\end{align*}
Complete positivity makes all displayed coefficients nonnegative and forces $\gamma_{\rm par}<0$ in this case.  If $\gamma_{\rm par}>0$, the decoded terminal-parity state lies outside the octahedron.  Together with the two
decompositions, this proves that $\Phi$ is completely stabilizer preserving~\cite{SeddonCampbell2019,SaxenaGour2022,HeimendahlHeinrichGross2022} if and only if $\gamma_{\rm par}\leq0$ in the stated domain.  Complete-positivity details are given in the SM~\cite{Choi1975,RuskaiSzarekWerner2002}.

\smallskip
\paragraph*{Thermal instrument and quiet branch.}
For the thermal channel, write $\eta=e^{-t/T_2}$, $b=e^{-t/T_1}$, and $k=(1-2p_e)(1-b)$.  The no-exchange survival weights of $\ket0$ and $\ket1$ are $s_0=e^{-\Gamma_\uparrow t}$ and $s_1=e^{-\Gamma_\downarrow t}$.
Removing the detected recycling terms from the thermal master equation gives Eq.~\eqref{eq:quiet}, whose complete
positivity follows from $\eta^2\leq s_0s_1=b$.

On the line $T_2=2T_1$ and for $0\leq p_e<1/2$, the quiet output of Lemma~\ref{lem:quietboundary} is pure.  Its Bloch components are equal at
\begin{equation}
 t_H=\frac{2T_1\ln(1+\sqrt2)}{1-2p_e}.
 \label{em:hadamard}
\end{equation}
At this exposure the retained branch is the Hadamard state.

\smallskip
\paragraph*{Measure-and-prepare decomposition and witness.}
The record-averaged channel decomposes as
\begin{equation}
\begin{aligned}
 \Phi_t={}&\eta(\Delta_X+\Delta_Y)+b\Delta_Z
 +[(1-p_e)(1-b)-\eta]\mathcal P_{\to0}\\
 &+[p_e(1-b)-\eta]\mathcal P_{\to1}.
\end{aligned}
 \label{em:mp}
\end{equation}
Here $\Delta_P$ dephases in the $P$ basis.  The coefficients are nonnegative if and only if
$\eta\leq p_e(1-b)$.  Necessity follows from the product-stabilizer witness $W_{\rm mp}=XX-YY+IZ+ZZ-ZI$.  Every product stabilizer state obeys
$\langle W_{\rm mp}\rangle\leq1$.  On the normalized Choi state, $\langle XX\rangle=-\langle YY\rangle=\eta$,
$\langle IZ\rangle=k$, $\langle ZZ\rangle=b$, and
$\langle ZI\rangle=0$, so $\langle W_{\rm mp}\rangle=2\eta+k+b$.
Thus, the condition is necessary and sufficient.  Since the input-side marginal is $I/2$, the Choi decomposition corresponds to a stabilizer POVM followed by stabilizer preparations, proving Theorem~\ref{thm:mp}.

\smallskip
\paragraph*{Readout comparison and branch criterion.}
For the Bell-pair probe, and more generally after decoding an input supported on the even-parity code, the two terminal-parity outcomes induce
\begin{align}
 \mathcal E_{\rm even}&=\eta\mathcal I
 +(1-p_e+p_eb-\eta)\mathcal P_{0\to0}\nonumber\\
 &\quad+[p_e+(1-p_e)b-\eta]\mathcal P_{1\to1},\nonumber\\
 \mathcal E_{\rm odd}&=(1-p_e)(1-b)\mathcal P_{1\to0}
 +p_e(1-b)\mathcal P_{0\to1}.
 \label{em:parityprobe}
\end{align}
In the setting of Theorem~\ref{thm:record}, $\chi>0$ also gives $\eta>b$.
Thus, when $\gamma_{\rm par}\leq0$,
Eq.~\eqref{em:channeldecomposition} gives a nonnegative decomposition of the whole channel. The subsequent $ZZ$ measurement is a stabilizer operation, so each outcome retains a nonnegative branchwise decomposition.  This proves the simulable direction of Theorem~\ref{thm:record}.

Every classical coarse graining of the thermal trajectory has branches of the form
\begin{equation}
 \mathcal E_\ell(\rho)=
 \begin{pmatrix}
  A_\ell\rho_{00}+B_\ell\rho_{11}&c_\ell\rho_{01}\\
  c_\ell\rho_{10}&C_\ell\rho_{00}+D_\ell\rho_{11}
 \end{pmatrix},
 \label{em:branchform}
\end{equation}
with nonnegative coefficients.  For a fixed input, the contribution of an accepted branch is its success probability times the surplus $\mathcal R-1$ of the robustness of magic $\mathcal R$ of the normalized output~\cite{HowardCampbell2017}.  These contributions add over accepted branches.

\begin{lemma}[Branch criterion and one-use optimum]\label{lem:branch}
The branch has signed stabilizer margin $\gamma_\ell=c_\ell-\min\{A_\ell,D_\ell\}$ and a nonnegative stabilizer decomposition if and only if $\gamma_\ell\leq0$.  Its largest one-use, stabilizer-assisted yield is $[\gamma_\ell]_+$, attained by one Bell pair and one terminal-parity check.
\end{lemma}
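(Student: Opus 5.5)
The proof has three parts: compute the margin of a normalized branch output, decompose the branch when the margin is nonpositive, and show that a Bell pair with one parity check attains the yield $[\gamma_\ell]_+$ while no stabilizer-assisted use does better.

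\textbf{Margin of a branch.} Start from the branch form Eq.~\eqref{em:branchform}. The map acts on the $2\times2$ block structure as a diagonal (classical) transition matrix $\begin{pmatrix}A_\ell&B_\ell\\C_\ell&D_\ell\end{pmatrix}$ on populations, and multiplies coherences by $c_\ell\geq0$. Apply the branch to one rail of a stabilizer ebit, measure $ZZ$, and decode the even sector, exactly as in the End Matter derivation of Eq.~\eqref{em:generalparmargin}. The even-sector unnormalized state has diagonal weights $A_\ell/2$ and $D_\ell/2$ and coherence $c_\ell/2$. After an $\{I,H\}$-type Clifford twirl, its success-weighted excess over the relevant octahedron facet is $c_\ell-\min\{A_\ell,D_\ell\}$. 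This follows from the same facet computation that gave Lemma~\ref{lem:quietboundary}: the facet excess of a state with diagonal weights $a,d$ and coherence $c$ is $c-\min\{a,d\}$ once one orients toward the smaller diagonal weight. Hence $\gamma_\ell>0$ exhibits a nonstabilizer output, which excludes any nonnegative stabilizer decomposition of the branch.

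\textbf{Decomposition when $\gamma_\ell\leq0$.} Assume without loss of generality $A_\ell\leq D_\ell$; otherwise conjugate the output by $X$ and swap the labels. Write
\[
\mathcal E_\ell=c_\ell\mathcal I+(A_\ell-c_\ell)\Delta_Z+(D_\ell-A_\ell)\mathcal P_{1\to1}+B_\ell\mathcal P_{1\to0}+C_\ell\mathcal P_{0\to1},
\]
mirroring Eq.~\eqref{eq:quiet} and the parity-probe decomposition Eq.~\eqref{em:parityprobe}. Every coefficient is nonnegative exactly when $c_\ell\leq\min\{A_\ell,D_\ell\}$. Each term is a trace-nonincreasing stabilizer operation, so this gives the required decomposition. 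Together with the previous paragraph, this proves the equivalence.

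\textbf{Yield optimality.} Achievability is the Bell-pair construction above. Its even-sector output already has success-weighted robustness surplus $[\gamma_\ell]_+$, because on the relevant facet the surplus equals the positive part of the signed excess~\cite{HowardCampbell2017}.

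The main obstacle is the converse: no stabilizer-assisted single use exceeds $[\gamma_\ell]_+$. The approach is to subtract the part of $\mathcal E_\ell$ that the decomposition above handles. Write $\mathcal E_\ell=\mathcal S+[\gamma_\ell]_+\,\mathcal I'$, where $\mathcal S$ is a nonnegative sum of stabilizer operations and $\mathcal I'$ is a single fixed trace-nonincreasing map. Concretely, replace $c_\ell$ by $\min\{A_\ell,D_\ell\}$ in the decomposition and put the leftover coherence $[\gamma_\ell]_+$ into the coherence-only map $\rho\mapsto$ off-diagonal part of $\rho$. I would then bound the robustness contribution of $\mathcal I'$ applied to any stabilizer input, including entangled inputs followed by stabilizer postprocessing. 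The key facts to use are the following.
\begin{itemize}
\item Robustness surplus is subadditive under sums of unnormalized operations.
\item It is monotone under stabilizer operations.
\item It vanishes on the $\mathcal S$ part.
\end{itemize}
What remains is to show that the coherence-only piece has unit surplus at most per unit weight. For this I would bound the success-weighted robustness by the diamond-type norm of the off-diagonal map. The norm computation is the delicate step, and I would check it against the Bell-pair value to confirm tightness.
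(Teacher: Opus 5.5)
Your margin computation, your decomposition, the exclusion of a decomposition when $\gamma_\ell>0$, and the Bell-pair achievability all agree with the paper. Your decomposition is the paper's $c_\ell\mathcal I+(A_\ell-c_\ell)\mathcal P_{0\to0}+(D_\ell-c_\ell)\mathcal P_{1\to1}+B_\ell\mathcal P_{1\to0}+C_\ell\mathcal P_{0\to1}$ with $\Delta_Z$ expanded, so no WLOG is needed.

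\textbf{Where you depart: the converse.} The paper reduces to pure stabilizer inputs and disposes of product inputs directly. $Z$ eigenstates stay diagonal, and equatorial eigenstates give $[c_\ell-\min\{A_\ell+B_\ell,C_\ell+D_\ell\}]_+\le[\gamma_\ell]_+$. It then sends every rank-two input to a Bell pair by the stabilizer normal form, and combines strong monotonicity with the sector computation $\mathcal R(J)=1+[\gamma_\ell]_+$ for the Choi state. Your split $\mathcal E_\ell=\mathcal S+[\gamma_\ell]_+\mathcal O$, with $\mathcal O$ your coherence-only map, could replace all of that. But the proposal stops exactly at the step that carries the content, and the tool you suggest cannot supply it.

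\textbf{Why the diamond-norm route fails.} Trace-norm quantities bound robustness from below, not above: if $Y=\sum_i x_i s_i$ with stabilizer states $s_i$, then $\|Y\|_1\le\sum_i|x_i|$. For example, the channel $\rho\mapsto\tr(\rho)\kb{H}{H}$, with $\ket H$ the Hadamard-axis magic state, has diamond norm $1$ but output robustness $\sqrt2$. So no diamond-norm estimate can certify ``unit surplus per unit weight.'' Moreover, $\mathcal O$ is not positive, so it is not a trace-nonincreasing operation, and its output has no robustness surplus in the usual sense. You must work with the homogeneous extension $\mathcal R(Y)=\min\{\sum_i|x_i|:Y=\sum_i x_i s_i\}$ on Hermitian operators and track traces separately.

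\textbf{The missing observation.} $\mathcal O(\rho)=\frac12(\rho-Z\rho Z)$ is a signed combination of two Clifford channels with total weight one. For any stabilizer input $\psi$ on the exposed rail and a register, $Y=(\mathcal O\otimes\mathrm{id})(\psi)=\frac12(\psi-Z_A\psi Z_A)$ therefore satisfies $\mathcal R(Y)\le1$ and $\tr Y=0$.

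Let $\{\mathcal M_k\}$ be the stabilizer post-processing, completed to a trace-preserving instrument, with outputs $\sigma_k$. Each term $\mathcal R(\sigma_k)-\tr\sigma_k$ is nonnegative, so including rejected outcomes only enlarges the yield. Subadditivity, together with $\mathcal R=\tr$ on the $\mathcal S$ part, strong monotonicity, and $\sum_k\tr\mathcal M_kY=\tr Y=0$, gives
\[
\sum_k\bigl[\mathcal R(\sigma_k)-\tr\sigma_k\bigr]
\le[\gamma_\ell]_+\sum_k\bigl[\mathcal R(\mathcal M_kY)-\tr\mathcal M_kY\bigr]
\le[\gamma_\ell]_+\,\mathcal R(Y)\le[\gamma_\ell]_+ .
\]

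\textbf{Comparison.} With this step your converse is complete, and it is shorter than the paper's. It treats mixed and arbitrarily entangled stabilizer inputs uniformly, with no normal form and no exact Choi-state robustness. The paper's route, in exchange, identifies the optimal probe class and shows explicitly that product inputs do worse.
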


\begin{proof}
When $\gamma_\ell\leq0$, the branch equals
\begin{align*}
 \mathcal E_\ell={}&c_\ell\mathcal I
 +(A_\ell-c_\ell)\mathcal P_{0\to0}
 +(D_\ell-c_\ell)\mathcal P_{1\to1}\\
 &+B_\ell\mathcal P_{1\to0}+C_\ell\mathcal P_{0\to1}.
\end{align*}
A product $Z$ eigenstate remains diagonal.  An equatorial Pauli eigenstate
has surplus
\begin{equation}
 [c_\ell-\min\{A_\ell+B_\ell,C_\ell+D_\ell\}]_+
 \leq[\gamma_\ell]_+.
\end{equation}
For an entangled stabilizer input, the stabilizer Schmidt normal form reduces
the exposed rail and one logical reference to a Bell pair. The terminal-parity check decodes the even Choi block to
$\frac12\bigl(\begin{smallmatrix}A_\ell&c_\ell\\
c_\ell&D_\ell\end{smallmatrix}\bigr)$, whose yield is
$[\gamma_\ell]_+$.  Strong monotonicity of robustness shows that no stabilizer post-processing can do better
~\cite{Fattal2004,HowardCampbell2017}.  If $\gamma_\ell>0$, the same decoded
block lies outside the stabilizer octahedron, so the branch cannot have a nonnegative stabilizer decomposition.
\end{proof}

Coarse graining adds branch coefficients.  The minimum is superadditive, so
a merged branch has margin at most $\sum_\ell\gamma_\ell$ and yield at most
$[\sum_\ell\gamma_\ell]_+\leq\sum_\ell[\gamma_\ell]_+$.  It cannot increase the total yield.  Resolved histories containing a jump have $c_\ell=0$, so the exchange bit already attains the complete record's total yield
$[\eta-s_1]_+$.  Within the retained even terminal-parity sector, the acceptance model has margin $(1-r)\gamma_{\rm q}+r\gamma_{\rm par}$, which gives Eq.~\eqref{eq:rc}.


\clearpage
\onecolumngrid

\begin{center}
\textbf{\large Supplemental Material for\\
``Environmental records unlock universal quantum computation from thermal decoherence''}
\end{center}

\setcounter{equation}{0}
\setcounter{figure}{0}
\setcounter{table}{0}
\setcounter{section}{0}
\renewcommand{\theequation}{S\arabic{equation}}
\renewcommand{\thefigure}{S\arabic{figure}}
\renewcommand{\theHequation}{S\arabic{equation}}
\renewcommand{\theHfigure}{S\arabic{figure}}
\renewcommand{\thesection}{S\arabic{section}}
\renewcommand{\theHsection}{S\arabic{section}}
\renewcommand{\theHtable}{S\arabic{table}}
\setcounter{secnumdepth}{1}

\vspace{4pt}
\noindent
This Supplemental Material supplies the proofs and derivations used in the
Letter.  The general phase-covariant parameters $(\eta,b,k)$ follow the End
Matter convention.  For the thermal channel,
$\eta=e^{-t/T_2}$, $b=e^{-t/T_1}$, and $k=\tau(1-b)$, where
$\tau=1-2p_e$.  The no-exchange survival weights are
$s_0=e^{-\Gamma_\uparrow t}$ and $s_1=e^{-\Gamma_\downarrow t}$, with the
rates in Eq.~\eqref{sm:rates}.

\section{Phase-covariant channel and exact stabilizer region}
\label{sm:channel}

We use the End Matter convention $\eta,k\geq0$ and $0\leq b\leq1$.  In
Eq.~\eqref{sm:choimatrix}, the reference qubit is the first tensor factor.
The normalized Choi state is
\begin{equation}
 J_\Phi=\frac14
 \begin{pmatrix}
 1+b+k&0&0&2\eta\\
 0&1-b-k&0&0\\
 0&0&1-b+k&0\\
 2\eta&0&0&1+b-k
 \end{pmatrix}.
 \label{sm:choimatrix}
\end{equation}
Positivity of the odd entries and of the even $2\times2$ block requires
\begin{equation}
 k\leq1-b,\qquad \sqrt{4\eta^2+k^2}\leq1+b.
 \label{sm:cp}
\end{equation}
These conditions are necessary and sufficient for complete positivity.

The one-qubit stabilizer set is $|x|+|y|+|z|\leq1$.  We write $\SP$ for
stabilizer-preserving channels, which map this octahedron
into itself, and $\CSP$ for completely stabilizer-preserving channels, for
which the same property holds after adjoining a stabilizer reference.  Testing
the six vertices establishes
\begin{equation}
 \Phi\in\SP\quad\Longleftrightarrow\quad \eta+k\leq1.
 \label{sm:sp}
\end{equation}
The two decompositions and the decoded terminal-parity test in End Matter
establish the complete stabilizer-preserving criterion.  For later use, the
terminal-parity outcome probability and decoded Bloch vector are
\begin{equation}
 p_+=\frac{1+b}{2},\qquad
 \widetilde{\bm r}_+=\frac{(2\eta,0,k)}{1+b}.
 \label{sm:terminalstate}
\end{equation}
The decoded state is outside the octahedron if and only if
$\gamma_{\rm par}>0$.  Hence
\begin{equation}
 \Phi\in\CSP\quad\Longleftrightarrow\quad
 2\eta+k\leq1+b
 \quad\Longleftrightarrow\quad\gamma_{\rm par}\leq0.
 \label{sm:csp}
\end{equation}

A channel with a nonnegative stabilizer decomposition is sampled by drawing
one term from its state-independent weights and applying the corresponding
Pauli, dephasing, or reset tableau update.  Such decompositions are closed under tensor
products and adaptive stabilizer composition.  Section~\ref{sm:instrument}
constructs the continuous-time tableau sampler for the complete monitored
instrument~\cite{AaronsonGottesman2004}.

For a direct experimental test, define
\begin{equation}
 W=IZ+ZI+XX+XY+YX-YY-ZZ.
 \label{sm:witness}
\end{equation}
Every two-qubit stabilizer state obeys $\langle W\rangle\leq1$.  To see this analytically, organize the support into the three multiplicatively closed triples $\{IZ,ZI,ZZ\}$, $\{XX,YY,ZZ\}$, and $\{XY,YX,ZZ\}$, which share $ZZ$.  Any two nonshared operators from different triples anticommute, so a pure stabilizer group intersects the support in at most one complete triple or one isolated element.  The stabilizer product rule bounds each possible contribution by one, and convexity covers mixtures.  The 33 pure stabilizer states that saturate the bound span a 14-dimensional affine hull, so the inequality defines a facet of the two-qubit stabilizer polytope.  On the thermal Choi state, $\tr(WJ_{\Phi_t})=1+2\gamma_{\rm par}(t)$.  Thus five settings, $XX$, $XY$, $YX$, $YY$, and $ZZ$, read the facet, with the one-body terms obtained from the $ZZ$ shots.  These settings also give the alignment quadrature $\langle XY\rangle+\langle YX\rangle$, which must vanish in the calibrated phase-covariant frame.

\section{Energy-counting thermal instrument}
\label{sm:instrument}

We work in the calibrated rotating frame, with the known Hamiltonian removed.  Every exposure interval and active-rail schedule is a classically specified part of the circuit.

Let
\begin{equation}
 \Gamma_\uparrow=\frac{p_e}{T_1},\qquad
 \Gamma_\downarrow=\frac{1-p_e}{T_1},\qquad
 \frac1{T_2}=\frac{\Gamma_\uparrow+\Gamma_\downarrow}{2}+\Gamma_\phi.
 \label{sm:rates}
\end{equation}
We use the dissipator
$\D[L]\rho=L\rho L^\dagger-\frac12\{L^\dagger L,\rho\}$.
The thermal master equation is
\begin{equation}
 \dot\rho=\Gamma_\downarrow\D[\sigma_-]\rho
 +\Gamma_\uparrow\D[\sigma_+]\rho
 +\frac{\Gamma_\phi}{2}\D[Z]\rho,
 \label{sm:lindblad}
\end{equation}
where $\sigma_-=\kb{0}{1}$ and $\sigma_+=\kb{1}{0}$ are the
emission and absorption operators, respectively.
Resolving the two energy-jump channels but not the dephasing bath produces a record consisting of jump times and directions.  Between jumps, the unnormalized state obeys the no-count equation, whose solution is
\begin{equation}
 \N_t(\rho)=
 \begin{pmatrix}
 e^{-\Gamma_\uparrow t}\rho_{00}&e^{-t/T_2}\rho_{01}\\
 e^{-t/T_2}\rho_{10}&e^{-\Gamma_\downarrow t}\rho_{11}
 \end{pmatrix},
 \label{sm:nojump}
\end{equation}
which is the quiet branch of Eq.~\eqref{eq:quiet}.  For $t>0$, its determinant condition is
$e^{-2t/T_2}\leq e^{-t/T_1}$, equivalent to the physical bound $T_2\leq2T_1$.
The convention $p_e\leq1/2$ fixes the energy labels.  For $p_e>1/2$,
conjugation by $X$ exchanges the two levels and their reset records in this
section.

For $p_e\leq1/2$, $s_0\geq s_1$.  Moreover,
\begin{equation}
 \chi\leq0
 \quad\Longleftrightarrow\quad
 \frac1{T_2}\geq\frac{1-p_e}{T_1}
 \quad\Longleftrightarrow\quad \eta\leq s_1
 \quad (t>0).
 \label{sm:closediff}
\end{equation}
Matching matrix elements then proves the stabilizer decomposition in Eq.~\eqref{eq:quiet}.  The jump superoperators at a recorded time are
\begin{equation}
 \Jd(\rho)=\Gamma_\downarrow\rho_{11}\kb00,\qquad
 \Ju(\rho)=\Gamma_\uparrow\rho_{00}\kb11.
 \label{sm:jumpmaps}
\end{equation}
They are jump-rate densities,
$\Jd=\Gamma_\downarrow\mathcal P_{1\to0}$ and
$\Ju=\Gamma_\uparrow\mathcal P_{0\to1}$, proportional to
trace-nonincreasing stabilizer maps.  A click in $[t',t'+dt']$ contributes the
instrument element $\Jd\,dt'$ or $\Ju\,dt'$.  More explicitly, a quiet
segment followed by a jump has density
\begin{align}
 (\Jd\circ\N_{t'})(\rho)
 &=\Gamma_\downarrow e^{-\Gamma_\downarrow t'}\rho_{11}\kb00,
 \nonumber\\
 (\Ju\circ\N_{t'})(\rho)
 &=\Gamma_\uparrow e^{-\Gamma_\uparrow t'}\rho_{00}\kb11.
 \label{sm:eventbranches}
\end{align}
For an exposed rail $A$ entangled with a reference $R$, the factor $\rho_{11}\kb00$ in the first line is replaced by $\kb00_A\otimes\bra1\rho_{AR}\ket1_A$, and the upward line exchanges $0$ and $1$.  A jump is therefore a local $Z$ measurement, a conditional tableau update on $R$, and a computational-basis reset of $A$.

A Poisson uniformization makes the multiqubit sampler explicit.  Let $\mathcal L_0$ denote the generator conditioned on no recorded energy exchange.  On the simulable side it can be written as
\begin{equation}
 \mathcal L_0=-\frac{1}{T_2}\mathcal I
 +\left(\frac{1}{T_2}-\Gamma_\downarrow\right)\Delta_Z
 +(\Gamma_\downarrow-\Gamma_\uparrow)\mathcal P_{0\to0}.
 \label{sm:uniformization}
\end{equation}
Both nontrivial coefficients are nonnegative when $\chi\leq0$ and $p_e\leq1/2$.  On a sampled stabilizer trajectory, let $\pi_z$ be the population of the exposed rail in $\ket z$.  These populations belong to $\{0,1/2,1\}$ and are obtained directly from the global tableau.  The total trace rate of two auxiliary updates and the two recorded jumps is
\begin{equation}
 \left(\frac1{T_2}-\Gamma_\downarrow\right)
 +(\Gamma_\downarrow-\Gamma_\uparrow)\pi_0
 +\Gamma_\downarrow \pi_1+\Gamma_\uparrow \pi_0
 =\frac1{T_2}.
 \label{sm:uniformrate}
\end{equation}
The next proposal time is therefore exponential with state-independent rate $1/T_2$.  At a proposal, the four choice probabilities are the four rates on the left of Eq.~\eqref{sm:uniformrate}, each multiplied by $T_2$.  They correspond to an auxiliary $\Delta_Z$ update, an auxiliary projection onto $\ket0$, a recorded downward reset, and a recorded upward reset.  A $\Delta_Z$ update is sampled as a $Z$ measurement whose outcome remains internal to the simulator.  The auxiliary $\mathcal P_{0\to0}$ update is a $Z$ measurement conditioned on outcome $0$, and its proposal rate vanishes when $\pi_0=0$.  Every update is a tableau operation, including when the exposed rail is entangled with the rest of the processor.

For several simultaneously exposed rails, the proposal rates add.  The simulator draws from the total rate $\Lambda=\sum_i(1/T_{2}^{(i)})$ and chooses rail $i$ with probability $1/(T_{2}^{(i)}\Lambda)$.  Updating the global tableau after each event retains all correlations between the rails.  Summing the auxiliary events between recorded jumps recovers the Dyson expansion of $e^{t'\mathcal L_0}=\mathcal N_{t'}$.  The remaining marked events insert exactly the physical maps $\Jd$ and $\Ju$.  Induction over recorded events and intervening stabilizer operations proves equality of the complete joint law, including adaptive stopping rules and feedforward.

Let $H_{\max}$ bound, uniformly over detector records and adaptive choices,
the total monitored exposure in units of the local $T_2$, and suppose that
the classical controller runs in polynomial time.  The number of proposals is dominated by a Poisson variable of mean $H_{\max}$, and every proposal requires one polynomial-time tableau update.  The ideal real-number model therefore provides an exact stochastic representation with expected polynomial cost whenever $H_{\max}$ is polynomial.  For the finite-precision guarantee, the polynomial-time controller acts on detector records binned at the stated resolution and uses finitely represented control times.  Assuming the calibrated rates are efficiently computable, truncating the Poisson tail after $O(H_{\max}+\log(1/\varepsilon))$ proposals and evaluating cumulative probabilities to the corresponding precision keeps the total-variation error below $\varepsilon$, with overhead polynomial in $\log(1/\varepsilon)$ and the timestamp bit length.  Postselection on an exponentially rare outcome can still be inefficient.

On the resource side, apply Eq.~\eqref{sm:nojump} to $\kb++$.  Its trace is
$(s_0+s_1)/2$, and the normalized Bloch vector is
$(2\eta,0,s_0-s_1)/(s_0+s_1)$.  Because all coordinates are nonnegative,
\begin{equation}
 \frac{s_0+s_1}{2}
 \left(\frac{2\eta+s_0-s_1}{s_0+s_1}-1\right)
 =\eta-s_1.
 \label{sm:nojumpgain}
\end{equation}
Thus, for every chosen duration $t>0$, the branch leaves the stabilizer
octahedron if and only if $\chi>0$.  After the $\{I,H\}$ twirl, this is the distillable region of Ref.~\cite{Reichardt2005} (Sec.~\ref{sm:hierarchy}).

\section{Reciprocal common-bath block}
\label{sm:common}

Consider two identical qubits in a reciprocal common thermal
bath~\cite{Dicke1954,Lehmberg1970,MingantiEtAl2021}, whose exchange part is
\begin{align}
 \mathcal L_{\rm ex}(\rho)={}&
 \sum_{i,j=1}^{2}\Gamma^\downarrow_{ij}
 \left(\sigma_-^{(i)}\rho\sigma_+^{(j)}
 -\frac12\{\sigma_+^{(j)}\sigma_-^{(i)},\rho\}\right)
 +\sum_{i,j=1}^{2}\Gamma^\uparrow_{ij}
 \left(\sigma_+^{(i)}\rho\sigma_-^{(j)}
 -\frac12\{\sigma_-^{(j)}\sigma_+^{(i)},\rho\}\right),
 \label{sm:commonmaster}
\end{align}
with reciprocal KMS rate matrices
\begin{equation}
 \Gamma^\downarrow=\frac{1-p_e}{T_1}
 \begin{pmatrix}1&\kappa\\\kappa&1\end{pmatrix},
 \qquad
 \Gamma^\uparrow=\frac{p_e}{T_1}
 \begin{pmatrix}1&\kappa\\\kappa&1\end{pmatrix},
 \qquad |\kappa|\leq1.
 \label{sm:commonrates}
\end{equation}
Here $T_1$ and $p_e$ are defined by the diagonal local rates, with
$T_2^{-1}=(2T_1)^{-1}+\Gamma_\phi$; collective coupling need not leave
the reduced single-qubit dynamics exponential.
Local frequency shifts are absorbed into calibrated rotating frames, and
residual coherent exchange is neglected.  We include independent homogeneous
dephasing as $(\Gamma_\phi/2)\sum_i\D[Z_i]$.  Diagonalizing
Eq.~\eqref{sm:commonrates} produces symmetric and antisymmetric modes with initial
no-click hazards $(1\pm\kappa)/T_1$, so their splitting determines
$|\kappa|$.  On the excitation-hole pair, the upward and downward hazards
add to $1/T_1$, giving common attenuation, while $\kappa$ controls the relative
quiet filtering.

For the two-qubit sector $\mathcal H_1=\operatorname{span}\{\ket{10},\ket{01}\}$,
define $\ket{0_L}=\ket{10}$, $X_L=X_1X_2$, and $Z_L=Z_2$
(equivalently $-Z_1$ on $\mathcal H_1$).
Removing all monitored recycling terms leaves
\begin{equation}
 \dot\rho=-\frac{1}{2T_1}\{I+\kappa X_L,\rho\}
 +\Gamma_\phi\D[Z_L]\rho.
 \label{sm:commonquiet}
\end{equation}
The anticommutator filters the logical $X_L$ eigenmodes, while $\D[Z_L]$
removes the codeword coherence at rate $2\Gamma_\phi$.
Starting from $\ket{0_L}$, the
filter creates an $X_L$ component at order $t$, so the dephasing term first
changes its facet margin at order $t^2$.
When $\Gamma_\phi=0$, equivalently $T_2=2T_1$ in this
independent-dephasing model, the quiet Kraus operator is
\begin{equation}
 K_0\big|_{\mathcal H_1}=e^{-t/(2T_1)}
 \exp\!\left[-\frac{\kappa t}{2T_1}X_L\right].
 \label{sm:commonfilter}
\end{equation}
Applied to $\ket{0_L}$, the unnormalized state has trace
$e^{-t/T_1}\cosh(|\kappa|t/T_1)$, logical $X_L$ magnitude
$e^{-t/T_1}\sinh(|\kappa|t/T_1)$, and logical $Z_L$ component
$e^{-t/T_1}$.  Its facet excess after Clifford decoding is
$\gamma_{\rm coll}(t)=e^{-t/T_1}[1-e^{-|\kappa|t/T_1}]$.  For
$\kappa\neq0$, it is largest at
$t=T_1\ln(1+|\kappa|)/|\kappa|$, where it equals
$|\kappa|(1+|\kappa|)^{-1-1/|\kappa|}$.

For finite relative-phase dephasing, put
$\omega=\sqrt{\Gamma_\phi^2+\kappa^2/T_1^2}$.  The solution of
Eq.~\eqref{sm:commonquiet} is
\begin{equation}
 \gamma_{\rm coll}(t)=e^{-t/T_1}\left\{1-e^{-\Gamma_\phi t}
 \left[\cosh(\omega t)+
 \frac{\Gamma_\phi-|\kappa|/T_1}{\omega}\sinh(\omega t)\right]\right\}.
 \label{sm:commondephasing}
\end{equation}
At $\Gamma_\phi=\kappa=0$, the product
$(\Gamma_\phi-|\kappa|/T_1)\sinh(\omega t)/\omega$ is defined by continuity
as zero.  For every finite $\Gamma_\phi$, the expansion of
Eq.~\eqref{sm:commondephasing} at the origin is
\begin{equation}
 \gamma_{\rm coll}(t)=\frac{|\kappa|}{T_1}t+O(t^2).
 \label{sm:commonslope}
\end{equation}
Every $\kappa\neq0$ therefore admits a sufficiently short exposure with
positive margin for any finite $\Gamma_\phi$.  At $\kappa=0$, the quiet evolution on
$\mathcal H_1$ is attenuation followed by $Z_L$ dephasing.  With
$\pm$-mode resolution, the reachable set
$\{\ket{00},\ket{11}\}\cup\mathcal H_1$ is closed.  A click maps
$\mathcal H_1$ to $\ket{00}$ or $\ket{11}$, or maps either of those states
back to a logical $X_L$ eigenstate, while no-click evolution on $\ket{00}$ and
$\ket{11}$ is scalar.  Thus, every trajectory branch is stabilizer at
$\kappa=0$.  For $\kappa\neq0$, the click branches remain stabilizer and the
quiet branch on $\mathcal H_1$ has the positive short-time margin above.

The encoding is part of the model.  If a collective jump acts on an
unrestricted input, it need not preserve stabilizers.  For example,
$(\sigma_-^{(1)}+\sigma_-^{(2)})\ket{++}
\propto2\ket{00}+\ket{01}+\ket{10}$.  The monitored output basis is also part
of the instrument.  Even at $\kappa=0$, resolving a rotated combination
$\cos\theta\,\sigma_-^{(1)}+\sin\theta\,\sigma_-^{(2)}$ is admissible but has
a nonstabilizer logical effect unless $\theta\in(\pi/4)\mathbb Z$.

\section{Heterogeneous locations}
\label{sm:heterogeneity}

Let $i$ label factorized local thermal idle locations with separately monitored
records, known computational axes, and parameters $T_{1}^{(i)}$,
$T_{2}^{(i)}$, and $p_{e}^{(i)}$.  Define $\chi_i$ by Eq.~\eqref{eq:chi}
using those parameters.  For an exposure $t_i$, write
$\eta_i=e^{-t_i/T_2^{(i)}}$,
$s_{0,i}=e^{-p_e^{(i)}t_i/T_1^{(i)}}$, and
$s_{1,i}=e^{-(1-p_e^{(i)})t_i/T_1^{(i)}}$.  If $\chi_i\leq0$, then
the local coefficients obey
\begin{equation}
 \eta_i\leq s_{1,i}\leq s_{0,i}
 \qquad (t_i\geq0),
 \label{sm:localordering}
\end{equation}
and the quiet map has the nonnegative decomposition in Eq.~\eqref{eq:quiet}.  Every detected exchange is still a measurement and reset.  Tensor products of simultaneous branches and adaptive compositions of successive branches therefore remain stabilizer instruments with nonnegative weights.  The classical draws can depend on the location, exposure, and tableau-computable local populations, but on no other feature of the state.  Hence $\max_i\chi_i\leq0$ is the condition under which every callable branch has the nonnegative decomposition used by the sampler, and it suffices for classical simulation of an arbitrary array of independent local thermal instruments.

Conversely, suppose a repeatedly accessible monitored location $j$ has $\chi_j>0$.  Its quiet branch obeys $\eta_j>s_{1,j}$ at every finite exposure.  Applying that branch to $\ket+$ prepares the distillable state in Eq.~\eqref{sm:nojumpgain}.  Repeated preparation yields injection states under the selectable access and ideal stabilizer control assumed in the Letter.  The heterogeneous boundary is determined by the largest local $\chi_i$ rather than by averaged calibration parameters.

\section{Record refinement, timing, and the finite-temperature Hadamard point}
\label{sm:hierarchy}

The record-averaged product-probe output of $\ket+$, the bare output of the Letter, is
$\bm r_{\varnothing}=(\eta,0,\tau(1-b))$.  The three margins are
\begin{align}
 \gamma_{\varnothing}=\eta+\tau(1-b)-1
 &=\eta-[2p_e+(1-2p_e)b],
 \label{sm:averagedgain}
 \\
 \gamma_{\rm par}=p_+(\|\widetilde{\bm r}_+\|_1-1)
 &=\eta-[p_e+(1-p_e)b],
 \label{sm:endgain}
 \\
 \gamma_{\rm q}&=\eta-b^{1-p_e}.
 \label{sm:exchangegain}
\end{align}
The middle line follows from Eq.~\eqref{sm:terminalstate} with
$k=\tau(1-b)$, and the last from Eq.~\eqref{sm:nojumpgain}.  Subtracting in
pairs,
\begin{align*}
 \gamma_{\rm par}-\gamma_{\varnothing}&=p_e(1-b),\\
 \gamma_{\rm q}-\gamma_{\rm par}
 &=p_e\cdot1+(1-p_e)b-b^{1-p_e}.
\end{align*}
The second line is nonnegative by the weighted arithmetic--geometric mean
inequality with weights $p_e$ and $1-p_e$,
\begin{equation*}
 p_e\cdot1+(1-p_e)\,b\geq 1^{\,p_e}\,b^{\,1-p_e}=b^{1-p_e},
\end{equation*}
with equality only at $b=1$ or $p_e=0$.  Both differences vanish only at $p_e=0$ or
$t=0$.  At $p_e=0$, a single idle contains at most one exchange, so the
exchange and terminal-parity bits are equivalent and their margins coincide
at every exposure.

For $T_2=2T_1$ and $0<p_e<1/2$, set $u=e^{-t/(2T_1)}$.  Then the terminal-parity margin factors,
\begin{equation*}
 \gamma_{\rm par}=u-p_e-(1-p_e)u^2
 =(1-p_e)(1-u)\left(u-\frac{p_e}{1-p_e}\right),
\end{equation*}
so it is positive on $u_0<u<1$ and nowhere else, where $u_0=p_e/(1-p_e)$.
At that exposure,
$\gamma_{\rm q}=u_0-u_0^{2(1-p_e)}$, and the quiet-branch success probability
is $[u_0^{2p_e}+u_0^{2(1-p_e)}]/2$.  The margin vanishes linearly with
$p_e$, whereas the quiet-branch success probability tends to $1/2$ and
$t_{\rm par}^{0}\sim2T_1\ln(1/p_e)$.

These three levels can be nested in one acquisition.  Prepare a Bell pair,
monitor energy exchange on the exposed rail, and finally measure $ZZ$.  On
the no-exchange subset the state lies in the even sector, and Clifford
decoding recovers the quiet-branch state with margin $\gamma_{\rm q}$.  Erasing
the exchange record while retaining terminal parity leaves margin $\gamma_{\rm par}$.
The complementary sector has $p_-=(1-b)/2$ and
$\widetilde{\bm r}_-=(0,0,\tau)$.  If each sector is decoded and its label is
then erased, the weighted Bloch vectors sum to
\begin{equation}
 p_+\widetilde{\bm r}_+ + p_-\widetilde{\bm r}_-
 =\bigl(\eta,0,\tau(1-b)\bigr),
 \label{sm:nestedrecords}
\end{equation}
the record-averaged output used in Eq.~\eqref{sm:averagedgain}.  Thus, the inequalities compare genuine classical coarse grainings of a common experiment.

Since $(1-p_e)/T_1=(1+\chi)/T_2$, the quiet margin is
$\gamma_{\rm q}(t)=e^{-t/T_2}-e^{-(1+\chi)t/T_2}$.  On the resource side,
stationarity requires
$e^{-t/T_2}=(1+\chi)e^{-(1+\chi)t/T_2}$, that is $e^{\chi t/T_2}=1+\chi$,
so the unique stationary point is
\begin{equation}
 t_{\rm q}^{\star}=\frac{T_2\ln(1+\chi)}{\chi},\qquad
 \gamma_{\rm q}^{\star}
 =(1+\chi)^{-1/\chi}\left(1-\frac{1}{1+\chi}\right)
 =\chi(1+\chi)^{-1-1/\chi}.
 \label{sm:exchangeopt}
\end{equation}
Since $(1+\chi)^{-1/\chi}\to e^{-1}$, the margin opens as $\chi/e+O(\chi^2)$.
At this exposure, $s_1^\star=(1+\chi)^{-1-1/\chi}$ and
$s_0^\star=(s_1^\star)^{p_e/(1-p_e)}$.  Complete positivity gives $\chi\leq1-2p_e$ and hence
$p_e/(1-p_e)\leq(1-\chi)/(1+\chi)$.  The success probability
$p_{\rm s}^\star=(s_0^\star+s_1^\star)/2$ therefore obeys
\begin{equation}
 p_{\rm s}^\star\geq
 \frac{(1+\chi)^{1-1/\chi}+(1+\chi)^{-1-1/\chi}}{2}
 =(1+\chi)^{-1/\chi}\cosh[\ln(1+\chi)]>\frac{1}{e}.
 \label{sm:successbound}
\end{equation}
The bound tends to $1/e$ as $\chi\to0^+$.  Also, since $\chi\leq1$ and $s_1^\star$ decreases in $\chi$,
$s_1^\star\geq1/4$, so
$\gamma_{\rm q}^\star=\chi s_1^\star\geq\chi/4$.  After the
$\{I,H\}$ twirl, write the normalized Bloch coordinates as $r_x=r_z=x$.
Since $p_{\rm s}^\star\leq1$ and
$x-\frac12=\gamma_{\rm q}^\star/(2p_{\rm s}^\star)$, the margin bound implies
$x-\frac12\geq\chi/8$.  One Steane-code $H$-state round maps $x$ to
$f(x)=x^3(7+8x^4)/(1+14x^4)$ with acceptance probability
$(1+14x^4)/64$~\cite{Reichardt2005}.  At $x=1/2$, $f'(1/2)=7/5$ and the
acceptance probability is $15/512$.  Hence $O(\log(1/\chi))$ constant-cost
rounds reach a fixed excess using $(1/\chi)^{O(1)}$ raw states.  Subsequent
distillation and injection have the usual polynomial
overhead~\cite{BravyiKitaev2005}.

The exact terminal optimum is
\begin{equation}
 \frac{t_{\rm par}^{\star}}{T_1}
 =\frac{(1+\chi)\ln(1+\chi)}{\chi+p_e},\qquad
 \gamma_{\rm par}^{\star}
 =(\chi+p_e)(1+\chi)^{-(1+\chi)/(\chi+p_e)}-p_e.
 \label{sm:terminalexact}
\end{equation}
For fixed $0<p_e<1/2$, it opens quadratically.  Expanding in $t/T_2$ gives
$\gamma_{\rm par}=\chi t/T_2-\frac12
\frac{p_e+2\chi+\chi^{2}}{1-p_e}(t/T_2)^2+O[(t/T_2)^3]$, with its maximum at
$t_{\rm par}^{\star}/T_2=(1-p_e)\chi/p_e+O(\chi^{2}/p_e^{2})$ and
\begin{equation}
 \gamma_{\rm par}^{\star}
 =\frac{(1-p_e)\,\chi^{2}}{2p_e}+O(\chi^{3}/p_e^{2}).
 \label{sm:terminalopt}
\end{equation}
This expansion requires $\chi\ll p_e$.  Along $p_e=c\chi$, instead,
$\gamma_{\rm par}^{\star}/\chi\to
(1+c)e^{-1/(1+c)}-c$.  At $p_e=0$,
$\gamma_{\rm par}(t)=\gamma_{\rm q}(t)$ at every exposure, and their common
optimized value is
$\chi(1+\chi)^{-1-1/\chi}=\chi/e+O(\chi^2)$.  The quadratic law is therefore
a finite-temperature statement.

If $\Gamma_\phi=0$ and $p_e<1/2$, Eq.~\eqref{sm:nojump} has a single Kraus
operator and the conditional state is pure.  Put
$y=\sqrt{s_1/s_0}=e^{-(1-2p_e)t/(2T_1)}$.  Its Bloch coordinates are
\begin{equation}
 r_x=\frac{2y}{1+y^2},\qquad r_z=\frac{1-y^2}{1+y^2}.
 \label{sm:pureconditional}
\end{equation}
The Hadamard direction requires $2y=1-y^2$, hence $y=\sqrt2-1$ and
Eq.~\eqref{em:hadamard}.  At $p_e=1/2$ the two components never become equal
at finite exposure.  For $p_e=1/4$, the quiet-branch success probability is
$3\sqrt2-4\simeq0.243$.  Along the Hadamard-exposure curve $t=t_H(p_e)$,
the record-averaged channel reaches the complete stabilizer-preservation
boundary at the unique root of
$(1+\sqrt2)^{-1/(1-2p_c)}=p_c/(1-p_c)$, namely
$p_c\simeq0.192565$.

\section{One-use stabilizer-probe optimum}
\label{sm:probe}

Consider one use of a single-qubit channel, an arbitrary stabilizer input on the exposed qubit $A$ and a register $R$, followed by any stabilizer instrument.  Convexity reduces the input to a pure stabilizer state.  The reduced state $\rho_A$ of a pure stabilizer state is either a pure Pauli eigenstate or $I/2$.  In the first case the state factors across $A|R$.  In the second, the stabilizer normal form gives local Clifford operations on $A$ and $R$ such that
\begin{equation}
 (U_A\otimes U_R)\ket\Psi_{AR}
 =\ket{\Phi^+}_{A\bar R}\otimes\ket\phi_{R'},
 \label{sm:stabilizernormalform}
\end{equation}
where $\bar R$ is one logical qubit and $\ket\phi$ is a stabilizer
state~\cite{Fattal2004}.

For the phase-covariant channel, the largest product-input robustness surplus is $[\eta+k-1]_+$, attained by an equatorial Pauli eigenstate whenever it is positive.  For the one-ebit class, the identity
$(U_A\otimes I)\ket{\Phi^+}=(I\otimes U_A^T)\ket{\Phi^+}$ moves the exposed-rail Clifford to the reference.  Hence all rank-two inputs produce a Clifford image of the Choi state with stabilizer spectators.  Robustness is strongly monotone under the subsequent stabilizer instrument, so its average surplus cannot exceed $\R(J_\Phi)-1$~\cite{HowardCampbell2017}.

For the Choi state in Eq.~\eqref{sm:choimatrix}, the $ZZ$-parity measurement is nondisturbing because the state is already block diagonal in its parity sectors.  The even sector occurs with probability $p_+=(1+b)/2$ and decodes to Eq.~\eqref{sm:terminalstate}.  The odd sector has probability $p_-=(1-b)/2$ and is a computational-basis mixture.  A Clifford decoder writes the parity outcome to a computational-basis stabilizer ancilla.  Robustness is affine across these orthogonal classical sectors.  Strong monotonicity under measuring the ancilla gives one inequality.  The reverse follows by embedding optimal logical stabilizer decompositions in each sector and undoing the decoder.  Therefore
\begin{equation}
 \R(J_\Phi)
 =p_+\max\!\left\{1,\frac{2\eta+k}{1+b}\right\}+p_-
 =1+[\gamma_{\rm par}]_+.
 \label{sm:sectorrobustness}
\end{equation}
The same normalized-Choi robustness controls the cost of stabilizer propagation for postselected channels~\cite{Rall2019}.
Complete positivity implies $k\leq1-b$, and hence
$\gamma_{\rm par}-(\eta+k-1)=(1-b-k)/2\geq0$.  The ebit class is therefore never worse
than the product class.  For this phase-covariant family, terminal parity
attains the maximum one-use average robustness surplus
$[\gamma_{\rm par}]_+$, so larger
stabilizer codes cannot improve either the threshold or this one-use figure
of merit.  Applied to a trace-nonincreasing branch, the same sector argument
recovers the yield $[\gamma_\ell]_+$ in Lemma~\ref{lem:branch}.

\section{Thermal calibration lines and breaking channels}
\label{sm:thermal}

The numerical proxies quoted in the Letter use parameter triples reported
together for individual transmons: Table~S1 of
Ref.~\cite{Elder2020} and the unmonitored entries of Table~I in
Ref.~\cite{Goldblatt2024}.  The $74\,\mu\mathrm{s}$ value in the first
reference was obtained from Ramsey free induction (see Acknowledgments).
For the second device, the reported values are $T_1=67.0\pm0.3\,\mu\mathrm{s}$, $T_{2E}=68\pm1\,\mu\mathrm{s}$, and $p_e=0.80\pm0.05\%$. Treating the Hahn-echo time as a proxy gives $\chi=0.0068$.  Since
$\partial\chi/\partial T_{2E}=(1-p_e)/T_1=0.0148\,\mu\mathrm{s}^{-1}$,
the quoted $\pm1\,\mu\mathrm{s}$ error alone leaves its sign unresolved.
These coherence times are operational proxies for the homogeneous coherence
of the unsplit idle in the model.  We identify the reported low thermal
occupation with $p_e$ within a two-level truncation.
For the quoted estimates, the reported $T_1$ is interpreted as the
population-relaxation time $(\Gamma_\downarrow+\Gamma_\uparrow)^{-1}$.
If it instead denotes the downward lifetime
$T_{1,\downarrow}=\Gamma_\downarrow^{-1}$, then
$\chi=T_2/T_{1,\downarrow}-1$, giving $0.4510$ and $0.0149$ for the two
devices, respectively; the latter remains comparable to the coherence-time
uncertainty.

The terminal-parity margin $\gamma_{\rm par}(t)$ in Eq.~\eqref{sm:endgain} vanishes at $t=0$ and obeys
\begin{equation}
 e^{t/T_2}\dot \gamma_{\rm par}(t)
 =-\frac1{T_2}+\frac{1-p_e}{T_1}\,
 e^{-t(1/T_1-1/T_2)}.
 \label{sm:Mderivative}
\end{equation}
For $T_2>T_1$ the right-hand side decreases strictly, so $\gamma_{\rm par}$ is
single-peaked.  Its initial slope is $\chi/T_2$.  If $T_2\leq T_1$, then
$\eta\leq b$ and hence $\gamma_{\rm par}\leq p_e(b-1)\leq0$.  This proves the
$\chi$ criterion globally.  The same argument for
Eq.~\eqref{sm:averagedgain} yields initial slope $\chi_{\varnothing}/T_2$ and proves the
record-averaged $\ket+$ line.

At $p_e\leq1/4$, replacing $p_e$ by $2p_e$ in Eq.~\eqref{sm:endgain} yields
\begin{equation}
 \left.\gamma_{\varnothing}(t)\right|_{p_e}
 =\left.\gamma_{\rm par}(t)\right|_{2p_e},
 \label{sm:doubling}
\end{equation}
and therefore $\chi_{\varnothing}(p_e)=\chi(2p_e)$.  The interval between the two readout thresholds is
\begin{equation}
 \frac{T_1}{1-p_e}<T_2\leq
 \min\!\left\{\frac{T_1}{1-2p_e},2T_1\right\}.
 \label{sm:layer}
\end{equation}

The full output ellipsoid of a phase-covariant channel lies in the octahedron
iff~\cite{Patra2024}
\begin{equation}
 k+\sqrt{2\eta^2+b^2}\leq1.
 \label{sm:mb}
\end{equation}
Partial transposition of Eq.~\eqref{sm:choimatrix} shows that the channel is entanglement breaking iff~\cite{HorodeckiShorRuskai2003,KhatriSharmaWilde2020,Peres1996,Horodecki1996}
\begin{equation}
 4\eta^2+k^2\leq(1-b)^2.
 \label{sm:eb}
\end{equation}
For $T_2=2T_1$, set $u=e^{-t/(2T_1)}$, so $(\eta,b,k)=(u,u^2,\tau(1-u^2))$.  At $p_e=1/4$, or $\tau=1/2$, the identity point $u=1$ is completely stabilizer preserving, and for $t>0$ complete stabilizer preservation holds if and only if $u\leq1/3$.  Equations~\eqref{sm:eb} and \eqref{sm:mb} imply $u\leq(\sqrt7-2)/\sqrt3$ and $u\leq\sqrt{2/\sqrt3-1}$, respectively.  Hence the CSP, EB, and MB conditions all hold for $0<u\leq1/3$.  On $1/3<u\leq(\sqrt7-2)/\sqrt3$, the channel is both magic breaking and entanglement breaking but
$\gamma_{\rm par}=(3u-1)(1-u)/4>0$.  Its Choi state is separable and
nonstabilizer.

At $p_e=1/4$, the End Matter measure-and-prepare condition
$\eta\leq p_e(1-b)$ reduces to the narrower window
$u\leq\sqrt5-2$.  At the Hadamard exposure
$t_H=4T_1\ln(1+\sqrt2)$, Eq.~\eqref{em:mp} becomes
\begin{equation}
 \Phi_{t_H}=(3-2\sqrt2)(\Delta_X+\Delta_Y)
 +(17-12\sqrt2)\Delta_Z
 +(11\sqrt2-15)\mathcal P_{\to0}
 +(5\sqrt2-7)\mathcal P_{\to1}.
 \label{sm:mpdecomposition}
\end{equation}
All five weights are strictly positive and sum to one.  The three margins at
this point are $\gamma_{\rm q}=10-7\sqrt2$,
$\gamma_{\rm par}=7\sqrt2-10$, and $\gamma_{\varnothing}=4\sqrt2-6$.

Figure~\ref{sm:windows} compares the channel thresholds and the optimized
margins near the boundary.

\begin{figure}[t]
\centering
\includegraphics[width=0.9\textwidth]{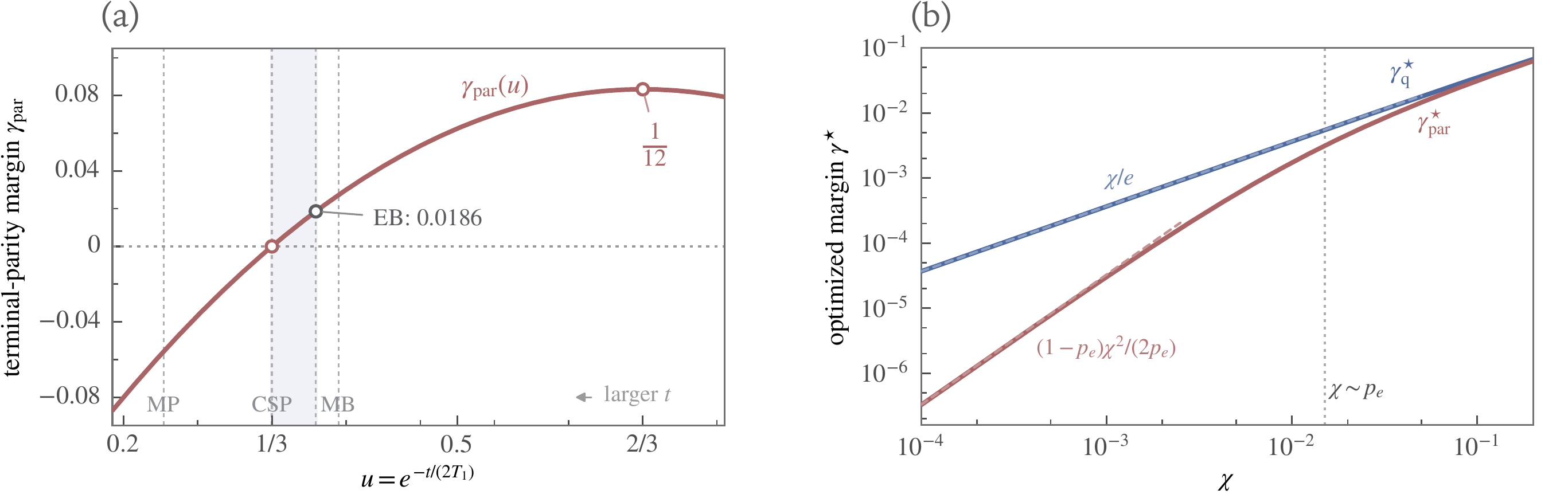}
\caption{Channel thresholds and optimized margins.
(a) At $p_e=1/4$ and $T_2=2T_1$, the horizontal axis is
$u=e^{-t/(2T_1)}$. The vertical guides mark the stabilizer
measure-and-prepare (MP), complete-stabilizer-preservation (CSP),
entanglement-breaking (EB), and magic-breaking (MB) thresholds. For
$0<u<1$, each condition holds to the left of its guide. The shaded interval
$1/3<u\leq u_{\rm EB}$ is entanglement breaking although its Choi state
is nonstabilizer. The terminal-parity margin takes the value $0.0186$
at $u_{\rm EB}$ and reaches its maximum $1/12$ at $u=2/3$.
(b) Exact optimized margins versus $\chi$ at $p_e=1.5\%$, with
$T_2/T_1=(1+\chi)/(1-p_e)$ varied. The dashed guides show the small-$\chi$
asymptotes $\chi/e$ and $\frac{(1-p_e)\chi^2}{2p_e}$. The quadratic law applies
only for $\chi\ll p_e$, with crossover on the scale $\chi\sim p_e$.}
\label{sm:windows}
\end{figure}

\section{Robustness, echo controls, and imperfect records}
\label{sm:scope}

For an $xz$-plane qubit, the robustness of magic in the convention $\R=1$ on the stabilizer polytope reduces to
$\R(\rho)=\max\{1,|r_x|+|r_z|\}$.  For each accepted state in Eqs.~\eqref{sm:averagedgain}, \eqref{sm:endgain}, and
\eqref{sm:exchangegain}, the signed margin obeys
$[\gamma]_+=p_{\rm s}[\R(\rho_{\rm s})-1]$.  Its positive part is the
success-weighted robustness surplus used throughout.  On the simulable side,
$-\gamma$ is a nonnegative coefficient in the corresponding stabilizer
decomposition.  Standard $\{I,H\}$ twirling
maps any positive-quadrant violation to Hadamard-axis polarization above
$1/\sqrt2$, after which Ref.~\cite{Reichardt2005} applies.

Within the time-homogeneous Markov model and a fixed calibrated phase frame,
an ideal instantaneous midpoint $X$ pulse provides a direct control for both
readout constructions.  Let $u=e^{-t/(2T_1)}$, with the final $X$ absorbed into the Pauli frame.  The transverse factor remains $u e^{-\Gamma_\phi t}$, while the magnitude of the displacement becomes $\tau(1-u)^2$.  The echoed terminal-parity margin therefore obeys
\begin{equation}
 2\gamma_{\rm par}^{\rm echo}(t)=-(1-\tau)(1-u)^2
 -2u\bigl(1-e^{-\Gamma_\phi t}\bigr)\leq0.
 \label{sm:echo}
\end{equation}
At zero temperature with $T_2=2T_1$, the terminal-parity echo lies exactly on
the facet.  For $t>0$, any finite temperature or homogeneous dephasing moves
it strictly inside.  The pulse removes the directed accumulation responsible for the terminal-parity window.

The pulse also makes the quiet-branch margin nonpositive.  In the quiet branch,
the two half-interval survival filters give the same diagonal weight
$u=e^{-t/(2T_1)}$, while the transverse weight is
$\eta=u e^{-\Gamma_\phi t}$.  The echoed quiet-branch margin is therefore
\begin{equation}
 \gamma_{\rm q}^{\rm echo}(t)
 =\eta-u
 =u\bigl(e^{-\Gamma_\phi t}-1\bigr)\leq0.
 \label{sm:quietecho}
\end{equation}
It lies on the stabilizer facet when $\Gamma_\phi=0$ and, for $t>0$, strictly
inside it when $\Gamma_\phi>0$.  Monitoring the dephasing bath as well would
define a different, unraveling-dependent instrument, which is not considered
here.

The ideal record in the Letter detects both upward and downward exchanges.  Let
$m_\uparrow$ and $m_\downarrow$ be the probabilities of missing the respective
jumps.  For the no-click output of $\ket+$, write
$n_0=\widetilde\rho_{00}$ and $n_1=\widetilde\rho_{11}$, so
$n_0(0)=n_1(0)=1/2$ and $2\widetilde\rho_{01}=\eta$.  The substochastic
population dynamics is
\begin{align}
 \dot n_0&=-\Gamma_\uparrow n_0
 +m_\downarrow\Gamma_\downarrow n_1,
 \nonumber\\
 \dot n_1&=m_\uparrow\Gamma_\uparrow n_0
 -\Gamma_\downarrow n_1.
 \label{sm:asymmetricpop}
\end{align}
The margin is
$\gamma_{m_\uparrow,m_\downarrow}=\eta-2\min\{n_0,n_1\}$.  Its right derivative at
the origin is
\begin{equation}
 \left.\dot\gamma_{m_\uparrow,m_\downarrow}\right|_{0^+}
 =-\frac1{T_2}+\max\!\left\{
 \begin{array}{l}
 \Gamma_\downarrow-m_\uparrow\Gamma_\uparrow,\\[-0.5mm]
 \Gamma_\uparrow-m_\downarrow\Gamma_\downarrow
 \end{array}\right\}.
 \label{sm:missedderivative}
\end{equation}
Complete positivity and $p_e\leq1/2$ imply
$1/T_2\geq1/(2T_1)\geq\Gamma_\uparrow$.  The second candidate for the derivative is at
most $-1/T_2+\Gamma_\uparrow\leq0$.  A positive initial slope is possible if
and only if
\begin{equation}
 (1-m_\uparrow)\chi+m_\uparrow\chi_{\varnothing}
 =\left.\chi\right|_{p_e\to(1+m_\uparrow)p_e}>0.
 \label{sm:missedline}
\end{equation}
Whenever this condition holds,
$\dot\gamma(0^+)=[(1-m_\uparrow)\chi+m_\uparrow\chi_{\varnothing}]/T_2$.  A missed
absorption enters the short-time test like additional excited-state
population.  In the strip $\chi_{\varnothing}\leq0<\chi$, the slope is positive for
$m_\uparrow<\chi/(\chi-\chi_{\varnothing})=(1-p_e)\chi/[p_e(1+\chi)]$.
Along the symmetric line, $m_\uparrow=m_\downarrow=1$ fully erases the record,
and the record-averaged $\ket+$ margin is already positive when $\chi_{\varnothing}>0$.

The generator in Eq.~\eqref{sm:asymmetricpop} is Metzler, so the margin is
nonincreasing in either miss probability at every exposure.  At fixed
$m_\downarrow$ in the strip $\chi_{\varnothing}\leq0<\chi$, the critical absorption-miss
probability for a positive-margin exposure is therefore no smaller than
this short-time value.

The terminal-parity measurement changes the asymmetric conclusion.  It rejects
every history containing only one missed exchange, so accepted contamination
requires a complete thermal round trip with both events missed.  For
independent misses the result depends only on the product of the two miss
probabilities.  Write
$x=t/T_1$ and
\begin{equation}
 M=m_\uparrow m_\downarrow,
 \qquad \Omega_M=\sqrt{\tau^2+(1-\tau^2)M},\qquad \tau=1-2p_e.
 \label{sm:monitoringparameters}
\end{equation}
The matrix exponential of Eq.~\eqref{sm:asymmetricpop}, now started
from a definite energy eigenstate, yields the no-click survival weights
\begin{align}
 s_0^{(M)}(x)&=e^{-x/2}\left[
 \cosh\frac{\Omega_Mx}{2}
 +\frac{\tau}{\Omega_M}\sinh\frac{\Omega_Mx}{2}\right],\nonumber\\
 s_1^{(M)}(x)&=e^{-x/2}\left[
 \cosh\frac{\Omega_Mx}{2}
 -\frac{\tau}{\Omega_M}\sinh\frac{\Omega_Mx}{2}\right].
 \label{sm:dm}
\end{align}
When
$\tau=M=0$, the products
$(\tau/\Omega_M)\sinh(\Omega_Mx/2)$ are defined by continuity as zero, so
$s_0^{(0)}=s_1^{(0)}=e^{-x/2}$.
Applying the no-click map to one rail of a stabilizer ebit and retaining the
even terminal-parity outcome produces the unnormalized logical state
\begin{equation}
 \sigma_{\rm nc,+}=\frac12
 \begin{pmatrix}s_0^{(M)}&\eta\\ \eta&s_1^{(M)}\end{pmatrix},
 \qquad
 \gamma^{(M)}(t)=\eta-s_1^{(M)}(t/T_1).
 \label{sm:directionterminal}
\end{equation}
Parenthesized superscripts label the monitoring interpolation, not powers.
For a noninverted bath, $s_0^{(M)}\geq s_1^{(M)}$.  The same
$s_0^{(M)}$ and $s_1^{(M)}$ are the $A,D$ coefficients of the
no-click map before terminal parity.  Lemma~\ref{lem:branch} therefore makes
$\eta\leq s_1^{(M)}$ the map-level stabilizer criterion.  By contrast,
Eq.~\eqref{sm:missedline} is the short-time line of the product probe
$\ket+$.  Perfect monitoring and complete erasure give, respectively,
\begin{equation}
 s_1^{(0)}(x)=e^{-(1-p_e)x}=s_1,
 \qquad
 s_1^{(1)}(x)=p_e+(1-p_e)e^{-x}.
 \label{sm:dmlimits}
\end{equation}
The trajectory expansion is a power series in $M$ with nonnegative
coefficients and a nonzero one-round-trip term, so $s_1^{(M)}$ is strictly
increasing for $0<p_e<1/2$ and $t>0$.

In the window $\gamma_{\rm par}(t)\leq0<\gamma_{\rm q}(t)$, the endpoint
identities $\gamma^{(0)}(t)=\gamma_{\rm q}(t)$ and
$\gamma^{(1)}(t)=\gamma_{\rm par}(t)$ show that the equation
$\eta=s_1^{(M_c)}(t/T_1)$ uniquely defines the threshold
$M_c(t)\in[0,1]$.  At
$p_e=1/4$ and $T_2=2T_1$, it is the dashed curve $M_c(u)$ in
Fig.~\ref{fig:ladder}(c), with its endpoints taken by continuity.
Equation~\eqref{sm:dm} interpolates between the terminal-parity and fully
resolved readouts.  If either direction is monitored perfectly, then $M=0$
and Eq.~\eqref{sm:directionterminal} reproduces the accepted state, probability,
and margin of the fully resolved quiet branch.  A no-click history can then
contain at most one jump in the unmonitored direction, which the even
terminal-parity outcome rejects.

The expansion of Eq.~\eqref{sm:dm} is
\begin{equation}
 s_1^{(M)}(x)=1-(1-p_e)x
 +\frac{1-p_e}{2}\bigl(1-p_e+p_eM\bigr)x^2+O(x^3).
 \label{sm:dmexpansion}
\end{equation}
Hence $\dot\gamma^{(M)}(0^+)=\chi/T_2$ for every $M$.  Since
$\gamma^{(M)}(0)=0$, $\chi>0$ gives positive margin at sufficiently
short exposure for every $M\in[0,1]$.  Conversely,
$s_1^{(M)}\geq s_1^{(0)}=s_1$, so $\chi\leq0$ implies
$\eta\leq s_1^{(M)}$ at every exposure.  Detector loss therefore leaves
the selectable-exposure existence boundary at $\chi=0$.

At $\chi=0$ and $p_e>0$,
$s_1^{(M)}>s_1^{(0)}=\eta$ for every $M>0$ and $t>0$, with the departure
beginning at order $t^2$.  At $p_e=0$, no thermal round trip exists,
$s_1^{(M)}=e^{-x}$ for every $M$, and the branch remains on the facet.
At $p_e=1/4$, $T_2=2T_1$, and
$t=4T_1\ln(1+\sqrt2)$, the fixed-time threshold satisfies
\begin{equation}
 \tanh\!\left[\Omega_{M_c}\ln(1+\sqrt2)\right]
 =\frac{1}{2\Omega_{M_c}},
 \qquad M_c\simeq0.54835.
 \label{sm:fixedmonitoringthreshold}
\end{equation}
The branch has positive margin for $M<M_c$.  Equal detectors therefore need
per-direction efficiency above $1-\sqrt{M_c}\simeq0.2595$.  At efficiency
$0.5$, where $M=1/4$, the margin from Eq.~\eqref{sm:directionterminal} is
$0.05744$, or $57\%$ of the ideal value $10-7\sqrt2\simeq0.10051$.

These formulas also provide a two-sided finite-time criterion.  If
$\eta\leq s_1^{(M)}$, the accepted no-click, even-parity map has the
nonnegative decomposition
\begin{equation}
 \mathcal E_{\rm nc,+}=\eta\mathcal I+
 (s_0^{(M)}-\eta)\mathcal P_{0\to0}
 +(s_1^{(M)}-\eta)\mathcal P_{1\to1}.
 \label{sm:directiondecomposition}
\end{equation}
All branches containing a detected click are measure-and-reset operations, and
a no-click odd branch is diagonal.  Hence the fixed-duration instrument has a
nonnegative stabilizer decomposition when $\eta\leq s_1^{(M)}$.  When
$\eta>s_1^{(M)}$, Eq.~\eqref{sm:directionterminal} is outside the
octahedron and is distillable.

Independent dark clicks at rate $\Gamma_{\rm d}$ multiply the accepted no-click map by $e^{-\Gamma_{\rm d}t}$.  They leave its normalized state and margin sign unchanged, while the success probability and success-weighted margin acquire the same factor.  For an independent, state-independent terminal-readout model, let $r_{\rm FR}$ and $r_{\rm FA}$ be the false-rejection and false-acceptance probabilities.  With no further parity check, the nominally even branch is
$(1-r_{\rm FR})\mathcal E_{\rm even}+r_{\rm FA}\mathcal E_{\rm odd}$.
For the Bell-pair probe followed by the even-sector decoder, its success-weighted margin in a noninverted bath is
\begin{equation}
 (1-r_{\rm FR})\gamma_{\rm par}-r_{\rm FA}p_e(1-b).
\end{equation}
False rejection alone rescales the success probability and margin, while false acceptance shifts the short-time condition to $(1-r_{\rm FR})\chi>r_{\rm FA}(\chi-\chi_{\varnothing})$.  The false-acceptance term
vanishes at $p_e=0$.

\end{document}